\definecolor{red}{rgb}{0.7, 0.11, 0.11}
\newcommand{\Mail}[1]{{\color{red} #1}}
\begin{document}
\theoremstyle{plain}
\newtheorem{theorem}{Theorem}[section]
\newtheorem{proposition}[theorem]{Proposition}
\newtheorem{corollary}[theorem]{Corollary}
\newtheorem{lemma}[theorem]{Lemma}
\theoremstyle{remark}
\newtheorem{remark}[theorem]{Remark}
\newtheorem{definition}[theorem]{Definition}
\newtheorem{assumption}[theorem]{Assumption}
\newtheorem{example}[theorem]{Example}
\newcounter{eqnum}
\setcounter{eqnum}{0}

\newcommand{\nexteq}{\stepcounter{eqnum}\tag{\theeqnum}}
\newcommand{\de}{\mathrm{\,d}}

\title{\bf On exact regions between measures of concordance \\ and Chatterjee's rank correlation  \\ for lower semilinear copulas}

\author{%
	Sebastian Fuchs\footnote{University of Salzburg, Austria, Email: \Mail{sebastian.fuchs@plus.ac.at}}
	  \qquad
    Carsten Limbach\footnote{University of Salzburg, Austria, Email: \Mail{carsten.limbach@plus.ac.at}}
    \qquad
	Fabian Schürrer\footnote{University of Salzburg, Austria, Email: \Mail{fabian.schuerrer@stud.plus.ac.at}}
	\vspace{5mm}
}

\maketitle

\begin{abstract}
We explore how the classical concordance measures--Kendall's $\tau$, Spearman's rank correlation $\rho$, and Spearman's footrule $\phi$--relate to Chatterjee’s rank correlation $\xi$ when restricted to lower semilinear copulas. 
First, we provide a complete characterization of the attainable $\tau$-$\rho$ region for this class, thus resolving the conjecture of \citet{maislinger2025}.
Building on this result, we then derive the exact $\tau$-$\phi$ and $\phi$-$\rho$ regions, obtain a closed-form relationship between $\xi$ and $\tau$, and establish the exact $\tau$-$\xi$ region.
In particular, we prove that $\xi$ never exceeds $\tau$, $\rho$, or $\phi$.
Our results clarify the relationship between undirected and directed dependence measures and reveal novel insights into the dependence structures that result from lower semilinear copulas.
\end{abstract}

MSC: 62H20, 62H05

\medskip
Keywords: Concordance; Directed dependence; Functional dependence; Lower semilinear copula; Markov product

\section{Introduction}

Dependence measures play a central role in quantifying the various types of dependence relationships between random variables. In practice, one typically models these relationships using copula families--such as Gaussian, Archimedean or extreme‐value copulas--that admit closed-form expressions for the desired dependence measures.

Another particularly tractable family of copulas is given by the lower (or, symmetrically, upper) semilinear copulas.
Geometrically, one begins by splitting the unit square $[0,1]^2$ along its main diagonal and then imposing linearity on each side, while preserving all defining properties of a copula; see the introductory paper \cite{durante2008} for an illustration and a probabilistic interpretation.
Concretely, let $\delta: [0,1] \to [0,1]$ be a \emph{copula diagonal}, i.e.~a non-decreasing, 2-Lipschitz function such that $\delta(t) \leq t$ for all $t \in [0,1]$, satisfying the shape constraints 
\begin{align}\label{Def:LSLDiagonal}
  t \mapsto \frac{\delta(t)}{t} \textrm{ is non-decreasing }
  \qquad \textrm{ and } \qquad
  t \mapsto \frac{\delta(t)}{t^2} \textrm{ is non-increasing}\,,
\end{align} 
implying that $t^2 \leq \delta(t) \leq t$ for all $t \in [0,1]$.
We denote by \(\mathcal{D}^{\rm LSL}\) the convex class of all those functions $\delta$.
Then, for such a \(\delta \in \mathcal{D}^{\rm LSL}\) the associated \emph{lower semilinear copula} $S_\delta: [0,1]^2 \to [0,1]$ is defined by \cite{durante2008}
\begin{align}\label{Def:LSL}
    S_\delta(u,v)
    & := 
    \begin{cases}
      v \; \frac{\delta(u)}{u} & \textrm{ if } v \leq u
      \\
      u \; \frac{\delta(v)}{v} & \textrm{ otherwise. }
    \end{cases}
\end{align}
The class of lower semilinear copulas is convex, compact under the sup-norm \cite{maislinger2025} and denoted by \(\mathcal{C}^{\rm LSL}\).
According to \cite{durante2008}, if $C \in \mathcal{C}^{\rm LSL}$ with copula diagonal $\delta(t) = C(t,t)$ for all $t \in [0,1]$, then $\delta \in \mathcal{D}^{\rm LSL}$.
The converse direction does not hold, in general; see Example \ref{ex:MO}.
This semilinear construction admits closed-form expressions for key dependence measures such as Kendall's tau $\tau$, Spearman's rho $\rho$ and Spearman's footrule $\phi$ as follows \cite{durante2006}: 
\begin{align}\label{LSL:MoC}
    \tau(S_\delta)
    & := 4 \, \int_{[0,1]} \frac{\delta^2(t)}{t} \de \lambda(t) - 1\,, 
    \\
    \rho(S_\delta)
    & := 12 \, \int_{[0,1]} t \, \delta(t) \de \lambda(t) - 3\,, \notag
    \\
    \phi(S_\delta)
    & := 6 \, \int_{[0,1]} \delta(t) \de \lambda(t) - 2\,. \notag
\end{align}
For continuous random variables $X$ and $Y$ whose joint distribution function is distributed according to a copula $S_\delta$, concordance measures evaluate the extent of monotone dependence between $X$ and $Y$. Such a functional is $1$ ($-1$) if $X$ and $Y$ are comonotone (countermonotone) and $0$ if $X$ and $Y$ are independent; see \cite{nelsen2006}.

In \cite{maislinger2025}, the authors investigated the $\tau$-$\rho$ region $\Omega_{\tau,\rho}^{\rm LSL}$ for lower semilinear copulas, namely 
\begin{align*}
  \Omega_{\tau,\rho}^{\rm LSL} := \{(\tau(S_\delta),\rho(S_\delta)): S_\delta \in \mathcal{C}^{\rm LSL}\}\,,
\end{align*}
the set of all attainable values of the pair $(\tau(S_\delta),\rho(S_\delta))$ with $S_\delta \in \mathcal{C}^{\rm LSL}$. 
The authors conjectured that 
\begin{align}\label{Conjecture}
  \Omega_{\tau,\rho}^{\rm LSL}
  = \{(x,y) \in [0,1]^2: x \leq y \leq 1 - (1-x)^{3/2}\}\,,
\end{align}
but to date have only succeeded in proving the lower inequality $\tau(S_\delta) \leq \rho(S_\delta)$.

In this paper, we present a proof of the upper inequality $\rho(S_\delta) \leq 1-(1-\tau(S_\delta))^{3/2}$ using a Hölder-type argument, thereby confirming the conjecture in \eqref{Conjecture}.
We further offer a streamlined proof of the lower inequality that avoids the use of Markov kernels.
Then, leveraging these techniques, we determine both the $\rho$-$\phi$ and $\tau$-$\phi$ region for lower semilinear copulas, demonstrating that the resulting concordance values for any given lower semilinear copula differ only slightly.
The latter result is not too surprising, since concordance measures capture essentially quite similar types of dependence relationships.

We then extend our study and relate the above concordance measure also to Chatterjee's rank correlation introduced in \cite{chatterjee2020} and given, for random variables $X$ and $Y$ whose joint distribution function is associated with copula $C$, by
\begin{align}\label{Def:Chatterjee}
  \xi(C)
  & := \frac{\int_{\mathbb{R}} {\rm var} (P(Y \geq y \, | \, X)) \; \mathrm{d} P^{Y}(y)}
					{\int_{\mathbb{R}} {\rm var} (\mathds{1}_{\{Y \geq y\}}) \; \mathrm{d} P^{Y}(y)}\,.
\end{align}
$\xi$ is also known as Dette-Siburg-Stoimenov measure \cite{siburg2013} and quantifies the extent of functional dependence of $Y$ on $X$. $\xi$ attains values in $[0,1]$, is $0$ if and only if $X$ and $Y$ are independent and is $1$ if and only if $Y$ perfectly depends on $X$, that is, $Y$ is almost surely a function of $X$.
In contrast to concordance measures, $\xi$ quantifies directed dependence, i.e.~the impact of the random variable $X$ on the random variable $Y$. The two concepts are therefore fundamentally different.
We show that, for any choice of $S_\delta \in \mathcal{C}^{\rm LSL}$, 
$\xi(S_\delta)$ admits an explicit closed-form expression in $\delta$ and is always smaller than the corresponding values of Kendall's tau, Spearman's rho and Spearman's footrule, thereby complementing the discussion in \cite{ansari2025rockel}.
Finally, we prove the exact region determined by Chatterjee's rank correlation and Kendall's tau.

\section{Pairwise comparison of concordance measures}

Since the concordance measures in \eqref{LSL:MoC} depend solely on the copula diagonal, we begin by briefly reviewing the fundamental properties of $\delta \in \mathcal{D}^{\rm LSL}$. 
Building on this, we then prove the conjectured $\tau$-$\rho$ region for lower semilinear copulas, as originally proposed in \cite{maislinger2025}, and determine the corresponding $\rho$-$\phi$ and $\tau$-$\phi$ regions.

Copula diagonals are Lipschitz-continuous, and therefore absolutely continuous and differentiable $\lambda$-almost everywhere. 
As shown in \cite[Corollary 5]{durante2008}, the shape constraints in \eqref{Def:LSLDiagonal} are equivalent to the following chain of inequalities 
\begin{align} \label{Def:LSL2}
  \delta(t)\leq t \, \delta'(t) \leq 2 \, \delta(t)
\end{align}
which holds for $\lambda$-almost every $t \in [0,1]$.

Throughout this work, we make frequent use of the copula diagonals $u_a$ and $l_a$, introduced in \cite[Example 3.3]{maislinger2025}, which will play a central role in our analysis.

\begin{example} \label{ex:uala}
For \(a \in [0,1]\), define the functions \(u_a, l_a: [0,1] \to [0,1]\) by
\begin{align*} 
  u_a(t) 
  & := \begin{cases} \frac{t^2}{a} &\text{if } t\leq a \\ t &\text{if } t>a \end{cases} 
  & l_a(t) 
  & := \begin{cases} at &\text{if } t\leq a \\ t^2 &\text{if } t>a. \end{cases} 
\end{align*}
Then, \(u_a, l_a \in \mathcal{D}^{\rm LSL}\) and their corresponding values for \(\tau\), \(\rho\) and \(\phi\) are as follows:

\begin{center}
\begin{tabular}{l||ccc}
\toprule
 & $\tau(S_\delta)$ & $\rho(S_\delta)$ & $\phi(S_\delta)$ \\
\midrule
$\delta = u_a$ & $1-a^2$ & $1-a^3$ & $1-a^2$ \\
$\delta = l_a$ & $a^4$ & $a^4$ & $a^3$ \\
\bottomrule
\end{tabular}
\end{center}

\begin{figure}[h]
  \centering
  \begin{minipage}[b]{0.45\linewidth}
    \includegraphics[width=0.9\linewidth]{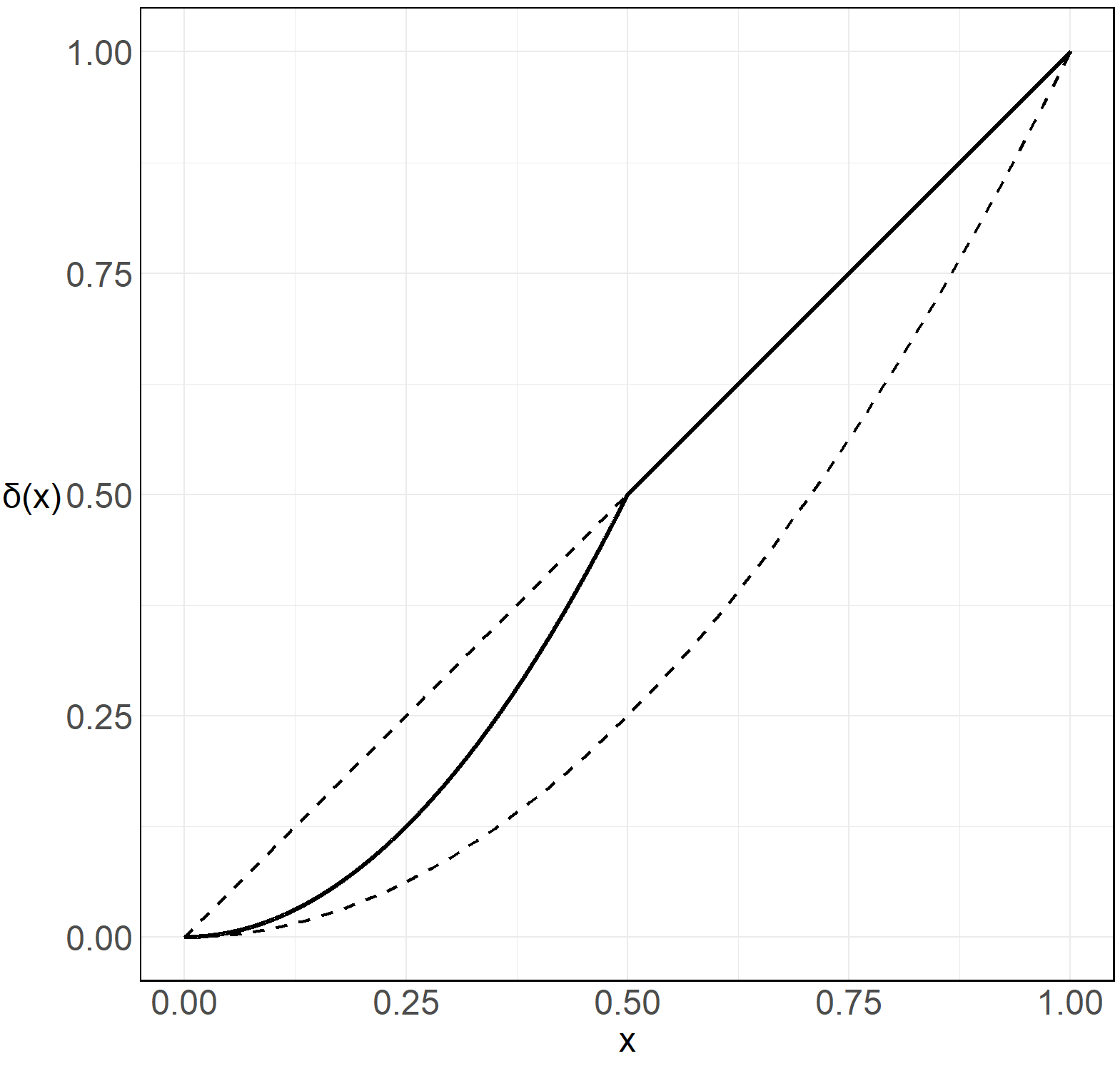}
  \end{minipage}
\hfill
  \begin{minipage}[b]{0.45\linewidth}
    \includegraphics[width=0.9\linewidth]{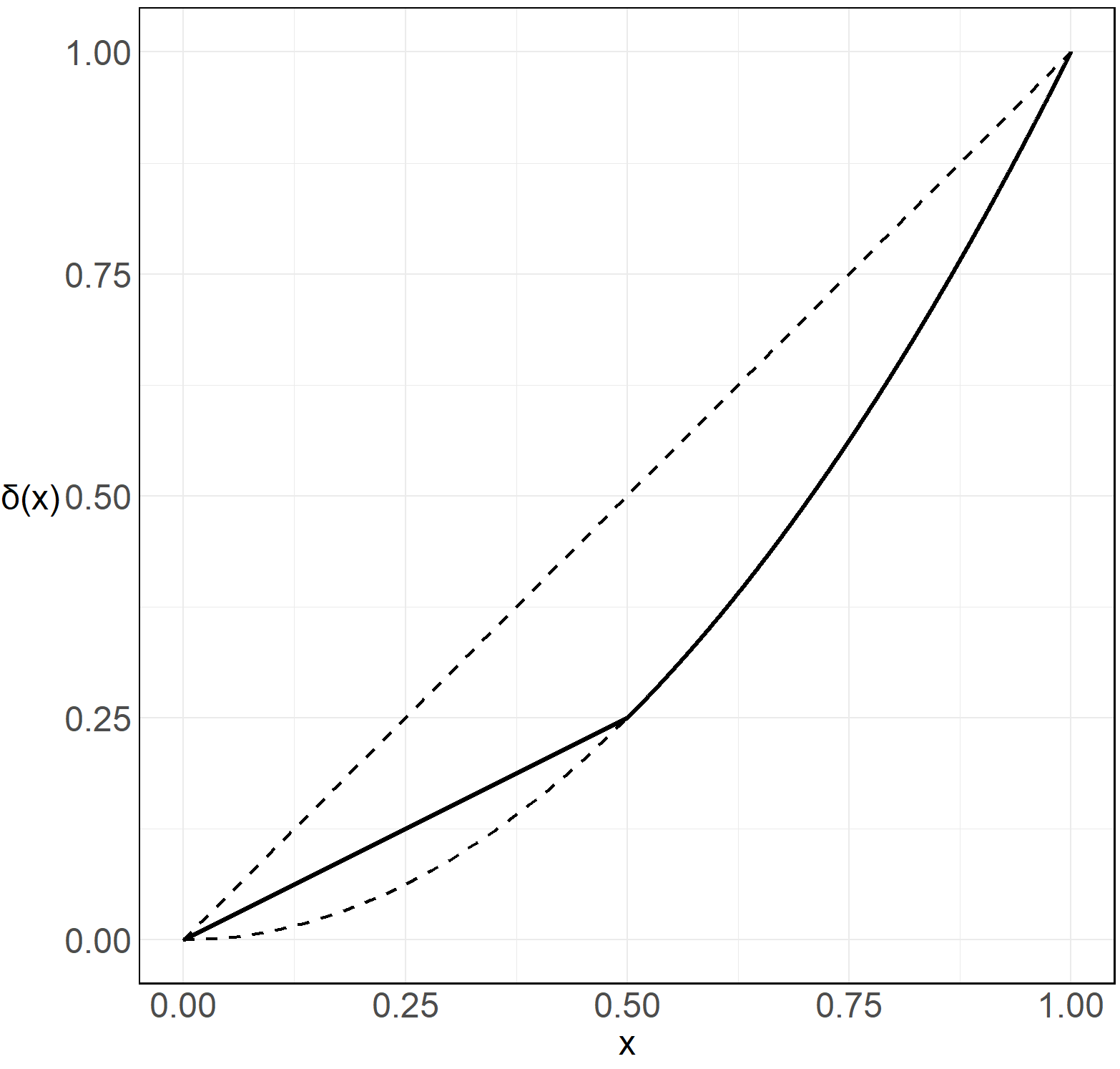}
  \end{minipage}
\caption{The copula diagonals \(\delta = u_a\) (left panel) and \(\delta=l_a\) (right panel) for \(a=\frac{1}{2}\) as considered in Example \ref{ex:uala}.}
\end{figure}
\end{example}

The next lemma relates the functions \(u_a\) and \(l_a\) to the inequalities in \eqref{Def:LSL2} and we will make use of this result in the following subsections.

\begin{lemma} \label{lem:ua}\label{lem:la}
Consider a copula diagonal \(\delta \in \mathcal{D}^{\rm LSL}\).
Then 
\begin{enumerate}
\item\label{lem:ua1} 
\(\delta=u_a\) for some \(a \in [0,1]\) if and only if, 
for $\lambda$-almost every \(t \in [0,1]\), either \(\delta(t)=t\) or \(t \,\delta'(t) =2\, \delta(t)\).
\item\label{lem:la1}
\(\delta=l_a\) for some \(a \in [0,1]\) if and only if,
for $\lambda$-almost every \(t \in [0,1]\), either \(\delta(t)=t^2\) or \(t \, \delta'(t) = \delta(t)\).
\end{enumerate}
\end{lemma}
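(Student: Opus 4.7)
The plan is to prove each biconditional in two steps: a short forward calculation by plugging in the piecewise formulas, and a reverse direction that uses the shape-constraint monotonicity in \eqref{Def:LSLDiagonal} to upgrade the a.e.\ pointwise dichotomy into a global interval decomposition. For the forward direction of (i), on $[0,a]$ one has $u_a(t) = t^2/a$ and hence $t\, u_a'(t) = 2 u_a(t)$, while on $(a,1]$ one has $u_a(t) = t$; part (ii) is handled by the same direct verification.

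For the reverse direction of (i), I would set $A := \{t \in (0,1] : \delta(t) = t\}$ and exploit that $\delta(t)/t$ is non-decreasing and bounded by $1$: once this ratio equals $1$ somewhere, it remains $1$ thereafter, so $A$ is forced to be an interval $[a,1]$ (non-empty since $\delta(1) = 1$). On the complementary open interval $(0,a)$, the a.e.\ dichotomy in the hypothesis reduces to $t\delta'(t) = 2\delta(t)$ a.e., i.e.\ $(\delta(t)/t^2)' = 0$ a.e.; absolute continuity of $\delta$ (from the Lipschitz property) promotes this to $\delta(t)/t^2 \equiv \mathrm{const}$ on $(0,a)$, and continuity at $t = a$ together with $\delta(a) = a$ pins the constant at $1/a$. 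This yields $\delta = u_a$. Part (ii) is mirror-symmetric: use $\delta(t)/t^2$ non-increasing and bounded below by $1$ to force $\{\delta(t) = t^2\} = [b,1]$, then integrate $t\delta'(t) = \delta(t)$ on $(0,b)$ to obtain $\delta(t)/t$ constant, pinned to $b$ by continuity at $t = b$, giving $\delta = l_b$.

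The main obstacle I foresee is not the ODE integration, which is routine, but rather the structural step of turning the a.e.\ pointwise dichotomy into a clean two-piece decomposition of $[0,1]$. A priori the two alternatives could interleave on a complicated measurable set, but the monotonicity of $\delta(t)/t$ (resp.\ $\delta(t)/t^2$) built into $\mathcal{D}^{\rm LSL}$ prevents this and forces the two pieces to be adjacent intervals. The degenerate endpoints $a \in \{0,1\}$ and $b \in \{0,1\}$, corresponding to $\delta(t) = t$ and $\delta(t) = t^2$ on the whole interval, need only a line of bookkeeping.
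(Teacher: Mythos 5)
Your proposal is correct and follows essentially the same route as the paper: direct verification for the forward implication, and for the converse, defining $a$ as the left endpoint of the set where $\delta(t)=t$ (resp.\ $\delta(t)=t^2$), using the monotonicity of $t \mapsto \delta(t)/t$ (resp.\ $t \mapsto \delta(t)/t^2$) from \eqref{Def:LSLDiagonal} to show this set is the interval $[a,1]$, and then integrating the a.e.\ differential identity on $[0,a)$ via absolute continuity to pin down the quadratic (resp.\ linear) piece. The only difference is presentational: you spell out the interval-decomposition step that the paper leaves implicit in its choice of $a$ as an infimum.
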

\begin{proof}
We first prove \ref{lem:ua1}. Verifying the properties for function $u_a$ is evident.
For the other direction, set
\(a := \inf \left \{t \in [0,1], \delta(t)=t\right\} \in [0,1]\). 
Then, \(t \,\delta'(t) = 2\, \delta(t)\) for $\lambda$-almost every \(t \in [0,a)\), hence $t \mapsto \delta(t)/t^2$ is constant on $[0,a]$ implying $\delta(t) = t^2/a$ on $[0,a]$.
Since $\frac{\delta(a)}{a} = 1 = \frac{\delta(1)}{1}$ and $t \mapsto \frac{\delta(t)}{t}$ is non-decreasing due to \eqref{Def:LSLDiagonal}, we finally obtain $\delta(t) = t$ on $(a,1]$.
A similar line of reasoning yields assertion \ref{lem:la1}.
\end{proof}

We now introduce several additional examples that will be revisited in the subsequent sections. In particular, we construct a diagonal that satisfies, in a piecewise manner, either the lower or the upper bound in \eqref{Def:LSL2}.

\begin{example} \label{ex:otherdiag}
Define the function $\delta:[0,1] \to [0,1]$ by 
\[
  \delta(t)
  :=\begin{cases}
  \frac{8t^2}{3} &\text{if } t \in \left[0, \frac{1}{4}\right] \\
  \frac{2t}{3}   &\text{if } t \in \left(\frac{1}{4}, \frac{1}{2}\right] \\
  \frac{4t^2}{3} &\text{if } t \in \left(\frac{1}{2}, \frac{3}{4}\right] \\
  t              &\text{if } t \in \left(\frac{3}{4}, 1\right].
\end{cases}
\]
Then $\delta \in \mathcal{D}^{\rm LSL}$ with $t \, \delta^\prime(t) = \delta(t)$ on $(1/4,1/2) \cup (3/4,1)$ and $t \, \delta^\prime(t) = 2 \, \delta(t)$ on $(0,1/4) \cup (1/2,3/4)$.
\begin{figure}[h]
    \centering
    \includegraphics[width=0.40\textwidth]{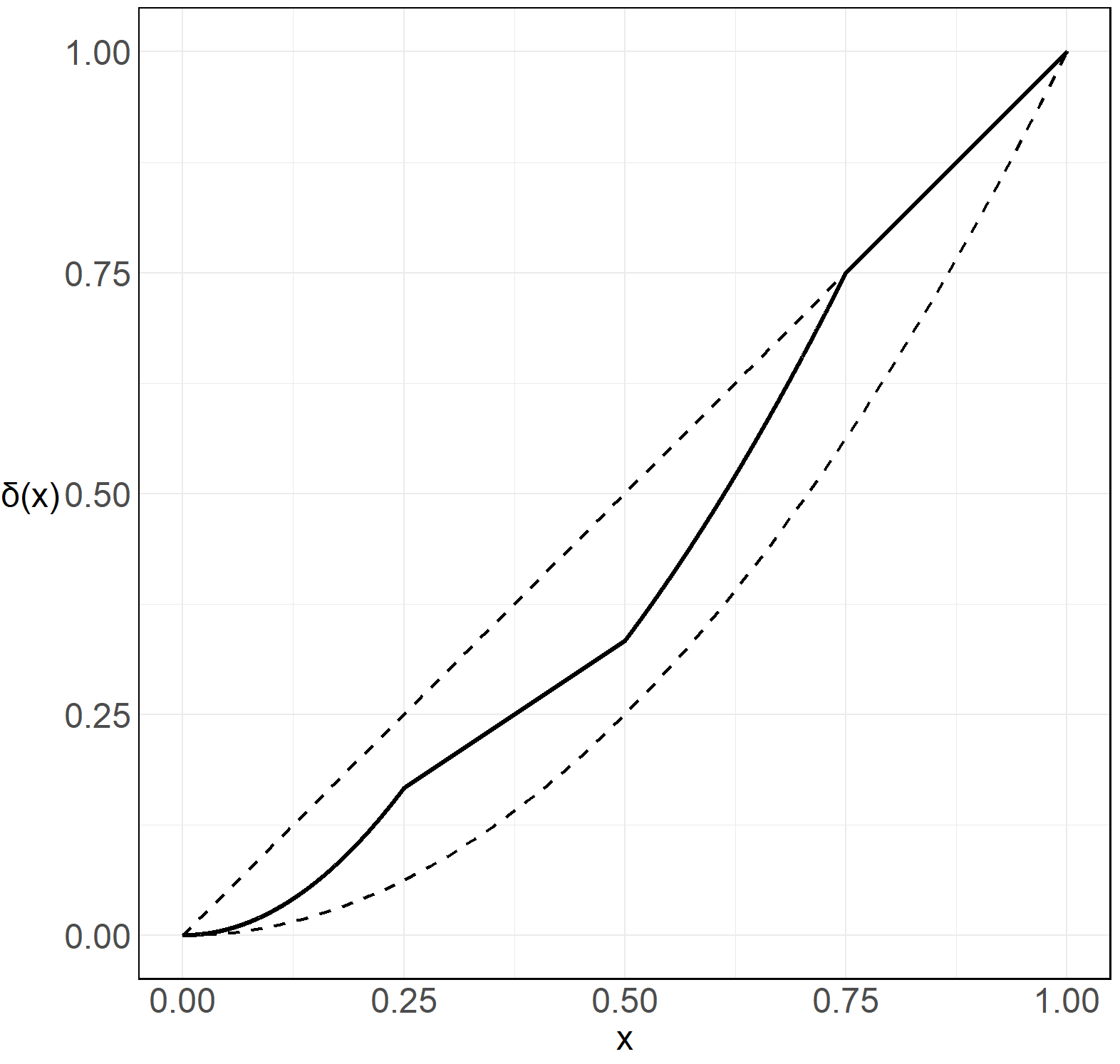}
\caption{The copula diagonal considered in Example \ref{ex:otherdiag}}
\end{figure}
\end{example}

\begin{example}[Fr{\'e}chet diagonals; \cite{durante2008}] \label{ex:frechet}
For $\alpha \in [0,1]$, consider the copula diagonal $\delta_{F;\alpha}$ given by 
$\delta_{F;\alpha}(t) := \alpha \, \delta_M + (1-\alpha) \, \delta_\Pi$,
where $\delta_M, \delta_\Pi: [0,1] \to [0,1]$ are the copula diagonals defined by $\delta_\Pi(t) = t^2$ and $\delta_M(t) = t$. 
Since $\delta_\Pi, \delta_M \in \mathcal{D}^{\rm LSL}$ and $\mathcal{D}^{\rm LSL}$ is convex, $\delta_{F;\alpha} \in \mathcal{D}^{\rm LSL}$ as well, with 
$t \, \delta^\prime_{F;\alpha}(t) = \delta_{F;\alpha}(t) + (1-\alpha) \, \delta_\Pi (t) = 2\, \delta_{F;\alpha}(t) - \alpha \, \delta_M (t)$ for all $t \in (0,1)$.
The associated lower semilinear copulas $S_{\delta_\alpha}$ are called Fr{\'e}chet copulas which are just convex linear combinations of the copulas $\Pi$ and $M$; see \cite[Example 7]{durante2008}.
The corresponding values for \(\tau\), \(\rho\) and \(\phi\) are as follows:
\begin{center}
\begin{tabular}{l||ccc}
\toprule
 & $\tau(S_{\delta_\alpha})$ & $\rho(S_{\delta_\alpha})$ & $\phi(S_{\delta_\alpha})$ \\
\midrule
$\delta_\alpha$ & $\alpha \, \frac{\alpha+2}{3}$ & $\alpha$ & $\alpha$ \\
\bottomrule
\end{tabular}
\end{center}
\end{example}

\begin{example}[Power diagonals] \label{ex:power}
For $p \in [1, 2]$, define the function $\delta_p: [0,1] \to [0,1]$ by 
$\delta_p(t) := t^p$.
Then $\delta_p \in \mathcal{D}^{\rm LSL}$ with $t \, \delta^\prime_p(t) = p \, \delta_p(t)$ for all $t \in (0,1)$, and the corresponding values for $\tau, \rho \text{ and } \phi$ are as follows:
    \begin{center}
    \begin{tabular}{l||ccc}
    \toprule
    & $\tau(S_{\delta_p})$ & $\rho(S_{\delta_p})$
    & $\phi(S_{\delta_p})$ \\
    \midrule
    $\delta_p $ & $\frac{2-p}{p}$ & $ 3 \frac{2-p}{p+2}$ & $ 2 \frac{2-p}{p+1}$ \\
    \bottomrule
    \end{tabular}
    \end{center}
    \end{example}

Finally, we discuss a well-studied class of copulas whose copula diagonals satisfy the shape constraints in \eqref{Def:LSL2}, but generally fail to be lower semilinear.

\begin{example}[Marshall-Olkin diagonals] \label{ex:MO}
For $\alpha, \beta \in [0,1]$, consider the copula diagonal $\delta_{MO;\alpha,\beta}$  given by 
\begin{align*}
  \delta_{MO;\alpha,\beta}(t)
  := t^{2 - \min\{\alpha,\beta\}}.
\end{align*}
Then, $\delta_{MO;\alpha,\beta} = \delta_{p}$ with $p=2 - \min\{\alpha,\beta\}$ (Ex.~\ref{ex:power}), hence $ \delta_{MO;\alpha,\beta} \in \mathcal{D}^{\rm LSL}$.
However, Marshall-Olkin copulas $M_{\alpha,\beta}$ \cite{durante2016} defined by 
\begin{align*}
  M_{\alpha,\beta}(u,v)
  := \min\{u^{1-\alpha}\,v,u\,v^{1-\beta}\}
\end{align*}
generally fail to be lower semilinear \cite[p.~10]{maislinger2025}, i.e.~$M_{\alpha,\beta}$ and $S_{\delta_{MO;\alpha,\beta}}$ generally differ.
If, instead, $\alpha = \beta$, then 
\begin{align*}
  M_{\alpha,\alpha}(u,v)
  & 
    = 
    \begin{cases}
         v \; \frac{\delta_{MO;\alpha,\alpha}(u)}{u} & \textrm{ if } v \leq u \\
         u \; \frac{\delta_{MO;\alpha,\alpha}(v)}{v} & \textrm{ otherwise }
    \end{cases}
    = S_{\delta_{MO;\alpha,\alpha}}(u,v)
    = S_{\delta_{2 - \alpha}}(u,v)
\end{align*}
for all $(u,v) \in [0,1]^2$. In other words, $M_{\alpha,\alpha}$ is lower semilinear. 
For the sake of completeness, we here list the values of \(\tau\), \(\rho\) and \(\phi\) for general Marshall-Olkin copulas $M_{\alpha,\beta}$ (see \cite{durante2016}):
\begin{center}
\begin{tabular}{ccc}
\toprule
$\tau(M_{\alpha,\beta})$ & $\rho(M_{\alpha,\beta})$ & $\phi(M_{\alpha,\beta})$ \\
\midrule
$\frac{\alpha\,\beta}{\alpha - \alpha\,\beta + \beta}$ & $\frac{3 \, \alpha\,\beta}{2 \, \alpha - \alpha\,\beta + 2\, \beta}$ & $\frac{2 \, \min\{\alpha,\beta\}}{3 - \min\{\alpha,\beta\}}$ \\
\bottomrule
\end{tabular}
\end{center}
\end{example}

\begin{remark}
Recalling \cite{maislinger2025}, for each $\delta \in \mathcal{D}^{\rm LSL}$ there exists a Borel set $\Lambda$ with $\lambda(\Lambda)=1$ and a measurable function $\overline{\Delta}: [0,1] \to [0,2]$ such that $\overline{\Delta}(t) = \delta'(t)$ for all $t \in \Lambda$. 
Now, define the function $\Delta: [0,1] \mapsto [0,2]$ by 
\begin{align*}
  \Delta(t) := \overline{\Delta}(t) \, \mathds{1}_{\Lambda} (t) + \frac{\delta(t)}{t} \, \mathds{1}_{\Lambda^c} (t)\,,
\end{align*}
with the convention that $\tfrac{0}{0} := 0$.
Then $\Delta$ coincides with $\delta'$ on $\Lambda$, 
hence it is a measurable version of $\delta'$ that is defined on the whole interval $[0,1]$.
Since in what follows we mainly work with Lebesgue integrals, we shall not distinguish between different versions of the diagonal derivative, as they coincide $\lambda$-almost everywhere.
\end{remark}

\subsection{The exact \(\tau\)-\(\rho\) region for lower semilinear copulas}

In \cite{maislinger2025}, the authors studied the \( \tau \)-\( \rho \) region $\Omega_{\tau,\rho}^{\rm LSL}$ stated in \eqref{Conjecture} for the class of lower semilinear copulas. 
They proved the inequality 
$$\tau(S_\delta)\leq \rho(S_\delta)$$
and further conjectured that 
$$(1-\tau(S_\delta))^3 \leq (1-\rho(S_\delta))^2.$$
Here, we complete the discussion by providing a proof for the conjectured second inequality and a new proof for the first inequality that avoids the use of Markov kernels.

\begin{theorem} \label{Thm:taurho}
The inequalities
\begin{align*}
  \tau(S_\delta) \leq \rho(S_\delta) \leq 1- (1- \tau(S_\delta))^{3/2}    
\end{align*}
hold for every \(S_\delta \in \mathcal{C}^{\rm LSL}\).
Moreover, both inequalities are sharp.
\end{theorem}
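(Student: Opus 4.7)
The plan is to handle the two inequalities separately via distinct monotone reparametrisations of the diagonal. For the lower inequality, I will read the shape constraint \eqref{Def:LSL2} as the pair of pointwise non-negativity statements $t\delta'(t)-\delta(t)\geq 0$ and $\delta(t)-t^2\geq 0$. A short integration by parts for $\int_0^1 t^2\delta'(t)\,d\lambda(t)$ (boundary values $\delta(0)=0$, $\delta(1)=1$) combined with the representations in \eqref{LSL:MoC} is expected to yield the closed-form identity
\[
  \rho(S_\delta)-\tau(S_\delta) \;=\; 4\int_0^1 \frac{\bigl(t\delta'(t)-\delta(t)\bigr)\bigl(\delta(t)-t^2\bigr)}{t}\,d\lambda(t),
\]
whose integrand is pointwise non-negative, giving $\tau(S_\delta)\leq \rho(S_\delta)$ without any Markov kernel machinery.

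For the upper inequality I will switch to the parametrisation $f(t):=\delta(t)/t^2$, which \eqref{Def:LSLDiagonal} forces to lie in $[1,1/t]$. With $\nu$ the probability measure on $[0,1]$ of density $4t^3$, the representations in \eqref{LSL:MoC} read $\tau(S_\delta)+1=\int_0^1 f^2\,d\nu$ and $\rho(S_\delta)+3=3\int_0^1 f\,d\nu$. For each $\theta\in(0,1/2]$ the concave quadratic $f\mapsto f-\theta f^2$ attains its maximum on the admissible interval $[1,1/t]$ at $f^\star(t)=\min\{1/(2\theta),1/t\}$, producing the pointwise Young/Hölder-type bound $f(t)-\theta f(t)^2\leq 1/(4\theta)$ on $\{t<2\theta\}$ and $f(t)-\theta f(t)^2\leq 1/t-\theta/t^2$ on $\{t\geq 2\theta\}$. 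Multiplying by $4t^3$ and integrating over $[0,1]$ produces the linear envelope
\[
  \rho(S_\delta)\;\leq\; 1 - 3\theta\bigl(1-\tau(S_\delta)\bigr) + 4\theta^3 \qquad (\theta\in(0,1/2]),
\]
and the choice $\theta^\star:=\tfrac12\sqrt{1-\tau(S_\delta)}\in[0,1/2]$ (valid because $\tau(S_\delta)\geq 0$ for every $S_\delta\in\mathcal{C}^{\rm LSL}$) minimises the right-hand side to exactly $1-(1-\tau(S_\delta))^{3/2}$. Sharpness is then immediate from Example~\ref{ex:uala}: the family $\{l_a : a\in[0,1]\}$ traces the lower boundary $\rho=\tau$ and $\{u_a : a\in[0,1]\}$ traces the upper boundary $\rho=1-(1-\tau)^{3/2}$.

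The main obstacle is the upper inequality: unconstrained Cauchy--Schwarz only delivers the weaker bound $\rho\leq 3\sqrt{\tau+1}-3$, and naïve Jensen-type approaches fail because the higher moments of $f$ under $\nu$ cannot be controlled from $\int f\,d\nu$ alone. The decisive step is the constrained pointwise maximisation of $f-\theta f^2$ on $[1,1/t]$, which exploits \emph{both} pointwise bounds on $f$ at once and produces a family of linear inequalities whose concave envelope is precisely the conjectured curve.
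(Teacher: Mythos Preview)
Your proposal is correct. The lower inequality argument is close in spirit to the paper's (both avoid Markov kernels and rely on the factorisation of $\rho-\tau$ together with the shape constraint $t\delta'\geq\delta$ and an integration by parts), but yours is marginally cleaner: you obtain the \emph{exact identity}
\[
\rho(S_\delta)-\tau(S_\delta)=4\int_0^1\frac{(t\delta'(t)-\delta(t))(\delta(t)-t^2)}{t}\,d\lambda(t)\geq 0,
\]
whereas the paper factors $\tau-\rho$ as $4\int(\delta-t^2)(\delta/t-2t)\,d\lambda$, then uses $\delta/t\leq\delta'$ as an inequality step before recognising the resulting integrand as $(\delta-t^2)(\delta-t^2)'$.

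For the upper inequality the two approaches are genuinely different. The paper works with $g(t)=\delta(t)/t$, passes to the auxiliary measure $\mu(S)=\int_S t^3 g'(t)\,d\lambda$, and applies H\"older with exponents $(3,3/2)$ to bound $(4A)^3$ by $(12B)^2$ directly. Your route instead sets $f=\delta/t^2$, recognises $\tau+1$ and $\rho+3$ as the second and (three times the) first moment of $f$ under $d\nu=4t^3\,d\lambda$, and then for each $\theta\in(0,1/2]$ maximises $f-\theta f^2$ pointwise on the admissible interval $[1,1/t]$. This yields the one-parameter family of linear bounds $\rho\leq 1-3\theta(1-\tau)+4\theta^3$, whose envelope at $\theta^\star=\tfrac12\sqrt{1-\tau}$ is exactly the conjectured curve. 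The paper's H\"older argument is shorter and delivers the sharp bound in one stroke; your Legendre/Young-type envelope is more elementary (no change of measure, no H\"older), makes the role of the pointwise constraint $f\leq 1/t$ very explicit, and has the pleasant feature that the optimiser $\theta^\star$ corresponds precisely to the extremal diagonal $u_a$ with $a=2\theta^\star$. Both routes identify the same extremal families $\{l_a\}$ and $\{u_a\}$ for sharpness.
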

\begin{proof}
We first provide an alternative proof for $\tau(S_\delta) \leq \rho(S_\delta)$.
Straightforward calculation yields
\begin{align*}
  \tau(S_\delta)-\rho(S_\delta)
  & = \left( 4 \, \int_{[0,1]} \frac{\delta^2(t)}{t} \de \lambda(t) - 1 \right) - \left( 12 \, \int_{[0,1]} t \, \delta(t) \de \lambda(t) - 3 \right) 
  \\
  & = 4 \, \int_{[0,1]} \frac{\delta^2(t)}{t} - \frac{t^2 \, \delta(t)}{t} - 2\,t\,\delta(t) + 2t^3 \de \lambda(x) 
  \\
  & = 4 \, \int_{[0,1]} \underbrace{(\delta(t)-t^2)}_{\eqref{Def:LSLDiagonal} \; \geq 0} \left(\underbrace{\frac{\delta(t)}{t}}_{\eqref{Def:LSL2} \; \leq \delta'(t)}-2t\right) \de \lambda(t) 
  \\
  & \leq 4 \, \int_{[0,1]} \underbrace{(\delta(t)-t^2)}_{=: f(t)} \, \underbrace{(\delta'(t)-2t)}_{= f^\prime(t)} \de \lambda(t) = 0\,.
\end{align*}
Since $f$ is Lipschitz continuous on $[0,1]$, hence absolutely continuous, and $f^\prime$ is integrable, 
the last identity follows from integration by parts (see \cite[p.64]{pap2002}) incorporating the identities $\delta(1) = 1$ and $\delta(0) = 0$.

We now prove the upper inequality. 
We begin by reformulating the original problem and then apply Hölder inequality.
Therefore, define the function $g: [0,1] \to [0,1]$ given by 
\begin{align*}
  g(t) 
	& := \begin{cases}
			 \frac{\delta(t)}{t}                & t \in (0,1] \\
			 \lim_{t \to 0} \frac{\delta(t)}{t} & t=0
			 \end{cases}
\end{align*}
and the two terms
\begin{align*}
  A 
	& := \frac{1}{2} - \int_{(0,1]} \frac{\delta^2(t)}{t} \de \lambda(x)
	   = \int_{(0,1]} t \, (1 - g^2(t)) \de \lambda(t)
	\\
	B 
	& := \frac{1}{3} - \int_{(0,1]} t \, \delta(t) \de \lambda(t)
	   = \int_{(0,1]} t^2 \, (1 - g(t)) \de \lambda(t)\,.
\end{align*}
Then $g$ is continuous and non-decreasing due to \eqref{Def:LSLDiagonal}, $t \leq g(t) \leq 1$ for all $t \in [0,1]$, and $g(1)=1$ as well as $g(0) \geq 0$.
Applying integration by parts \cite{billingsley-1995} to $A$ and $B$ and introducing the measure $\mu: \mathcal{B}((0,1]) \to [0,\infty]$ given by 
\begin{align*}
  \mu(S) 
  & := \int_{S} t^3 \, g'(t) \de \lambda(t)
\end{align*} 
further gives
\begin{align*}
  A 
  & = \int_{(0,1]} t^2 \, g(t) \, g'(t) \de \lambda(t)
	  = \int_{(0,1]} \frac{g(t)}{t} \de \mu(t)
  \\
  B  
  & = \frac{1}{3} \, \int_{(0,1]} t^3 \, g'(t) \de \lambda(t)
	  = \frac{1}{3} \, \int_{(0,1]} 1 \de \mu(t)
		= \frac{1}{3} \; \mu((0,1])\,.
\end{align*}
Since $(1 - \tau(S_\delta))^3 = (4A)^3$ and $(1 - \rho(S_\delta))^2 = (12B)^2$, it thus remains to prove that $(4A)^3 \leq (12B)^2$, but this follows from Hölder inequality with $p=3$ and $q=3/2$ as follows
\begin{align*}
  (4A)^3
	&   =  4^3 \, \left( \int_{(0,1]} \frac{g(t)}{t} \cdot 1 \de \mu(t) \right)^3
	   \leq 4^3 \, \int_{(0,1]} \left(\frac{g(t)}{t}\right)^3 \de \mu(t) \cdot \mu([0,1])^2
	\\
	&   =  4^3 \, \int_{(0,1]} g(t)^3 \, g'(t) \de \lambda(t) \cdot (3B)^2\,.
\end{align*}
Using integration by parts \cite{billingsley-1995} a second time gives 
\begin{align*}
  \int_{[0,1]} g(t)^3 \, g'(t) \de \lambda(t)
	&   =  \frac{g(1)^4 - g(0)^4}{4} 
	  \leq \frac{1}{4}\,.
\end{align*}
Thus, $(4A)^3 \leq 4^2 \, (3B)^2 = (12 \, B)^2$. This proves the desired inequality.

Finally, the lower inequality is sharp for the copulas \(S_{l_a}\) and the upper inequality is sharp for the copulas \(S_{u_a}\) due to Example \ref{ex:uala}.
\end{proof}

\begin{figure}[h]
  \centering
  \includegraphics[width=0.45\textwidth]{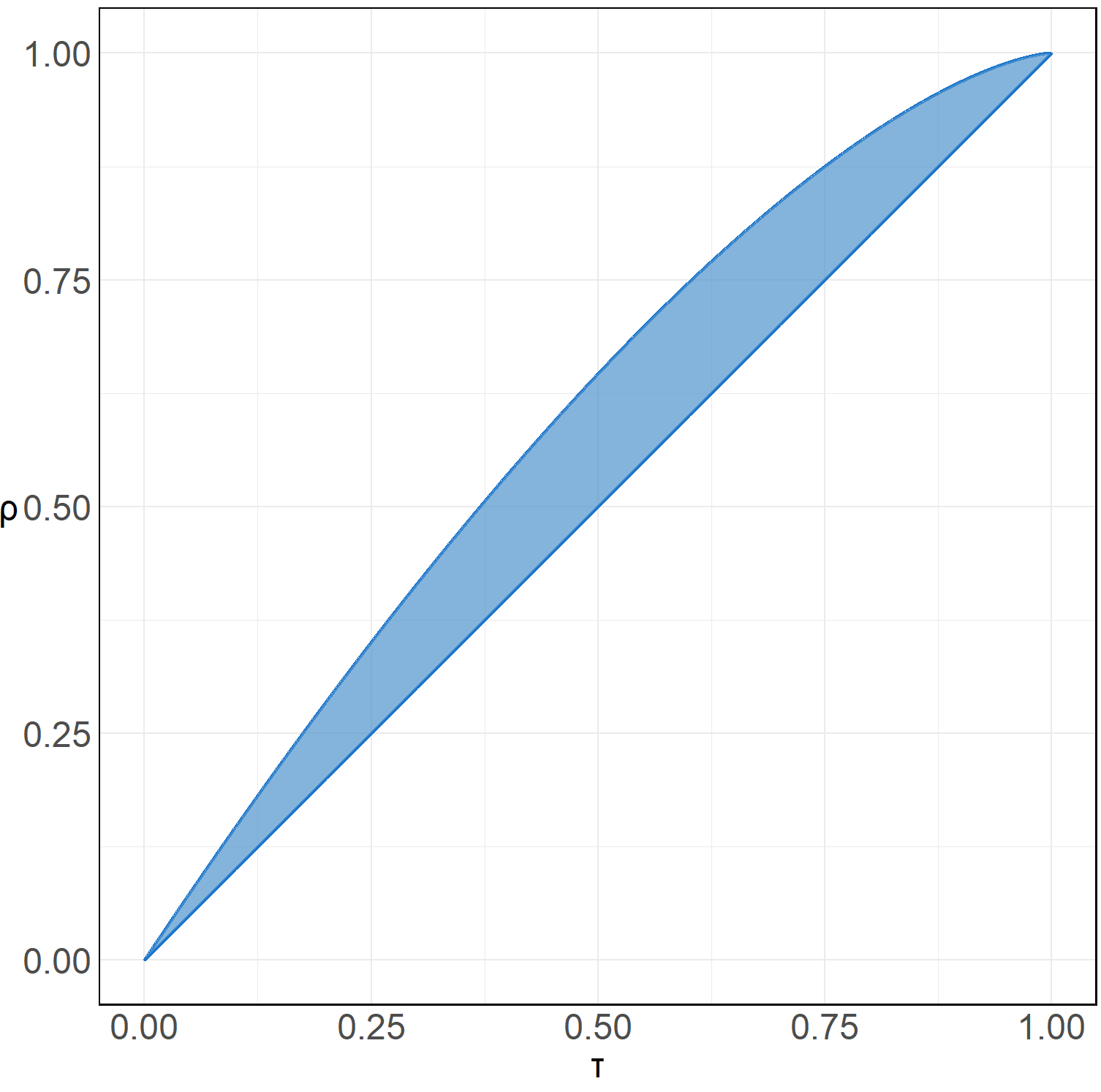}
\caption{The \(\tau\)-\(\rho\) region \( \Omega^{\rm LSL}_{\tau \text{,}\rho}\)}
\end{figure}

According to \cite{maislinger2025}, the \(\tau\)-\(\rho\) region is convex, compact and the lower inequality in Theorem \ref{Thm:taurho} is sharp exclusively for copulas of the form $S_{l_a}$ with copula diagonal $l_a$ discussed in Example \ref{ex:uala}.
In contrast, as stated in the proof of Theorem \ref{Thm:taurho}, the upper inequality is sharp for copulas \(S_{u_a}\).

\begin{remark}
In \cite{fredricks2007}, the authors have established the inequality $\tau(C) \leq \rho(C)$ for copulas $C$ that are left-tail decreasing (LTD), that is $u \, \partial_1 C(u,v) \leq C(u,v)$ for $\lambda$-almost all $u \in [0,1]$ and all $v \in [0,1]$, and right-tail increasing (RTI), that is $(1-u) \, \partial_1 C(u,v) \geq v - C(u,v)$ for $\lambda$-almost all $u \in [0,1]$ and all $v \in [0,1]$;
here, $\partial_1 C(u,v)$ denotes the partial derivate of $C$ with respect to the first argument (which, for each $v \in [0,1]$, exists for $\lambda$-almost all $u \in [0,1]$ \cite{durante2016}). 
Although lower semilinear copulas are LTD \cite[p.~6]{maislinger2025}, they typically fail to be RTI (take the lower semilinear copula with the copula diagonal in Example \ref{ex:otherdiag} for $v < u \in (1/4,1/2)$). 
Consequently, the inequality stated in Theorem \ref{Thm:taurho} is not encompassed by the corresponding result in \cite{fredricks2007}.
\end{remark}

\subsection{The exact \(\tau\)-\(\phi\) region for lower semilinear copulas}

We now determine the \(\tau\)-\(\phi\) region 
$\Omega^{\rm LSL}_{\tau \text{,}\phi}:=\{(\tau(S_\delta), \phi(S_\delta)): S_\delta \in \mathcal{C}^{\rm LSL}\}$ and prove that it coincides with the subset of the unit square given by
\[ \{(x,y) \in [0,1]^2: x\leq y \leq x^{\frac{3}{4}}\}.\]

\begin{theorem} \label{Thm:tauphi}
The inequalities
\begin{align*}
  \tau(S_\delta) \leq \phi(S_\delta) \leq (\tau(S_\delta))^{3/4}.
\end{align*}
hold for every \(S_\delta \in \mathcal{C}^{\rm LSL}\). Moreover, both inequalities are sharp.
\end{theorem}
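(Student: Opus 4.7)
The plan is to represent both $\phi(S_\delta)$ and $\tau(S_\delta)$ as the third and fourth moments, respectively, of a single common probability measure on $[0,1]$, from which both inequalities reduce to elementary comparisons of moments.

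Following the substitution used in the proof of Theorem \ref{Thm:taurho}, set $g(t) := \delta(t)/t$ on $(0,1]$ and $g(0) := \lim_{t \downarrow 0} g(t)$, so that $g$ is continuous and non-decreasing with $t \leq g(t) \leq 1$ and $g/t$ non-increasing. Integration by parts rewrites $\tau$ and $\phi$ as
\begin{align*}
    1 - \tau(S_\delta) = 4 \int_{(0,1]} t^2\, g(t)\, g'(t) \de \lambda(t), \qquad 1 - \phi(S_\delta) = 3 \int_{(0,1]} t^2\, g'(t) \de \lambda(t).
\end{align*}
Writing $s_0 := g(0)$ and letting $t(\cdot)$ denote the right-continuous inverse of $g$, the function $r(s) := t(s)/s$ is non-decreasing on $(s_0,1]$, takes values in $[0,1]$, and satisfies $r(1) = 1$. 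The change of variable $s = g(t)$ (so $g'(t)\,dt = ds$ and $t = t(s) = s\, r(s)$) followed by a Stieltjes integration by parts then produces
\begin{align*}
    \phi(S_\delta) = s_0^3\, r(s_0^+)^2 + \int_{(s_0,1]} s^3 \de(r(s)^2), \qquad \tau(S_\delta) = s_0^4\, r(s_0^+)^2 + \int_{(s_0,1]} s^4 \de(r(s)^2).
\end{align*}
Defining the Borel measure $\nu$ on $[0,1]$ as the sum of a point mass of weight $r(s_0^+)^2$ at $s_0$ and the Lebesgue--Stieltjes measure $\de(r^2)$ restricted to $(s_0,1]$, the monotonicity of $r^2$ together with $r(1) = 1$ makes its total mass equal to $r(s_0^+)^2 + (1 - r(s_0^+)^2) = 1$, so $\nu$ is a probability measure satisfying $\phi(S_\delta) = \int_{[0,1]} s^3 \de \nu(s)$ and $\tau(S_\delta) = \int_{[0,1]} s^4 \de \nu(s)$.

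Both inequalities are now immediate from this moment representation. The lower one follows from
\begin{align*}
    \phi(S_\delta) - \tau(S_\delta) = \int_{[0,1]} s^3 (1-s) \de \nu(s) \geq 0,
\end{align*}
since the integrand is non-negative on $[0,1]$; the upper one is equivalent to $\phi(S_\delta)^{4/3} \leq \tau(S_\delta)$, which is Jensen's inequality applied to $\nu$ and the convex function $x \mapsto x^{4/3}$ on $[0,\infty)$. Sharpness follows from Example \ref{ex:uala}: for $\delta = u_a$ the associated $\nu$ places mass $a^2$ at $0$ and mass $1-a^2$ at $1$, so $s^3(1-s)$ vanishes on its support; for $\delta = l_a$ the measure $\nu$ is the unit mass at $a$, making Jensen's inequality an equality. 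The main technical point is justifying the change of variables and Stieltjes integration by parts in the presence of flat intervals of $g$ (equivalently, jumps of $r$), which is routine given the absolute continuity of $\delta$ and the fact that flat parts of $g$ contribute nothing to the integrals involving $g'(t) \, dt$.
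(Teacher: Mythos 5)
Your argument is correct, and it takes a genuinely different route from the paper. The paper handles the two inequalities by separate methods: the lower one via the decomposition $\phi(S_\delta)-\tau(S_\delta)=2\int_{[0,1]}(t-\delta(t))\bigl(\tfrac{2\delta(t)}{t}-\delta'(t)\bigr)\de\lambda(t)$, which is non-negative by the shape constraint \eqref{Def:LSL2}; the upper one via the substitution $g=\delta/t^2$, an auxiliary measure with density $-g'/g^3$, and H\"older's inequality with exponents $4/3$ and $4$. You instead work with $g=\delta/t$ as in the proof of Theorem \ref{Thm:taurho}, pass to its generalized inverse, and package everything into a single probability measure $\nu$ for which $\phi(S_\delta)$ and $\tau(S_\delta)$ are the third and fourth moments; both inequalities then fall out simultaneously ($s^4\le s^3$ on $[0,1]$, and Jensen for $x\mapsto x^{4/3}$). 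I verified the key identities $1-\tau=4\int t^2 g g'\de\lambda$ and $1-\phi=3\int t^2 g'\de\lambda$, the monotonicity of $r(s)=t(s)/s$ (it follows from $\delta(t)/t^2$ being non-increasing), the unit total mass of $\nu$, and the resulting moment formulas against Examples \ref{ex:uala} and \ref{ex:power}; everything is consistent, e.g.\ $\nu=a^2\delta_0+(1-a^2)\delta_1$ for $u_a$ and $\nu=\delta_a$ for $l_a$. What your approach buys is a unified proof in which sharpness is transparent: equality in the lower bound holds iff $\nu$ is concentrated on $\{0,1\}$ and in the upper bound iff $\nu$ is a Dirac mass, which translated back through $r$ recovers the characterizations in Corollary \ref{tauphi:sharp} and the $l_a$-family. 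What it costs is the generalized-inverse bookkeeping, and three points there deserve explicit mention rather than the label ``routine'': (i) $r(1)=1$ requires fixing the convention $t(1):=1$ (e.g.\ $t(s)=\inf\{t:g(t)>s\}$ with $\inf\emptyset:=1$), since $g$ may equal $1$ on a whole interval $[a,1]$; (ii) the change of variables needs $g$ to have no singular part, which holds because $g$ is locally Lipschitz on $(0,1]$ ($\delta$ is Lipschitz) and $g(\varepsilon)-g(0)=\int_{(0,\varepsilon]}g'\de\lambda\to 0$; (iii) on flat intervals of $g$ one has $t(g(t))\neq t$, but $g'=0$ $\lambda$-a.e.\ there, so the identification $t^2=t(g(t))^2$ is valid $g'\de\lambda$-a.e.\ and the corresponding atoms of $\de(r^2)$ are picked up correctly by the Stieltjes integration by parts. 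With those points spelled out, the proof is complete.
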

\begin{proof}
We first prove the lower inequality $\tau(S_\delta) \leq \phi(S_\delta)$.
Straightforward calculation yields
\begin{align*}
  \phi(S_\delta)-\tau(S_\delta)
  & = \left( 6 \, \int_{[0,1]} \delta(t) \de \lambda(t) - 2 \right) - \left( 4 \, \int_{[0,1]} \frac{\delta^2(t)}{t} \de \lambda(t) - 1 \right) 
  \\
  & = \int_{[0,1]} 2 \, (t-\delta(t)) \, \frac{2\delta(t)}{t} - 2 \, (t-\delta(t)) \de \lambda(t) 
  \\
  & = \int_{[0,1]} 2\underbrace{(t-\delta(t))}_{\eqref{Def:LSLDiagonal} \; \geq 0}\left(\underbrace{\frac{2\delta(t)}{t}}_{\eqref{Def:LSL2} \; \geq \delta'(t)}-1\right) \de \lambda(t) 
  \\
  & \geq -2\int_{[0,1]} \underbrace{(t-\delta(t))}_{=:f(t)} \, \underbrace{(1-\delta'(t))}_{=f^\prime(t)} \de \lambda(t) = 0\,.
\end{align*}
Since $f$ is Lipschitz continuous on $[0,1]$, hence absolutely continuous, and $f^\prime$ is integrable, 
the last identity follows from integration by parts (see \cite[p.64]{pap2002}) incorporating the identities $\delta(1) = 1$ and $\delta(0) = 0$.

We now prove the upper inequality.
As in the proof of Theorem \ref{Thm:taurho}, we again begin by reformulating the original problem and then apply Hölder inequality.
Define the function $g: [0,1] \to [0,1]$ given by 
\begin{align*}
  g(x) 
	& := \begin{cases}
			 \frac{\delta(t)}{t^2}                & t \in (0,1] \\
			 \lim_{t \to 0} \frac{\delta(t)}{t^2} & t=0
			 \end{cases}
\end{align*}
and the two terms
\begin{align*}
  A 
	& := \int_{(0,1]} \delta(t) \de \lambda(t) - \frac{1}{3}
	   = \int_{(0,1]} t^2 \, (g(t) - 1) \de \lambda(t)
	\\
	B 
	& := \int_{(0,1]} \frac{\delta(t)^2}{t} \de \lambda(t) - \frac{1}{4}
	   = \int_{(0,1]} t^3 \, (g(t)^2 - 1) \de \lambda(t)\,.
\end{align*}
Then $g$ is continuous and non-increasing due to \eqref{Def:LSLDiagonal}, $1 \leq g(t) \leq 1/t$ for all $t \in [0,1]$, and $1/g(1)=1$ as well as $1/g(0) \geq 0$.
Applying integration by parts \cite{billingsley-1995} to $A$ and $B$ and introducing the measure $\mu: \mathcal{B}((0,1]) \to [0,\infty]$ given by 
\begin{align*}
  \mu(S) 
  & := \int_{S} \underbrace{\frac{-g'(t)}{g(t)^3}}_{\geq 0} \de \lambda(t)
\end{align*} 
further gives
\begin{align*}
  A 
  & = \frac{1}{3} \; \int_{(0,1]} t^3 \, (-g'(t)) \de \lambda(t)
	  = \frac{1}{3} \; \int_{(0,1]} t^3 \, g(t)^3 \de \mu(t)
  \\
  B  
  & = \frac{1}{2} \, \int_{(0,1]} t^4 \, g(t) \, (-g'(t)) \de \lambda(t)
	  = \frac{1}{2} \, \int_{(0,1]} t^4 \, g(t)^4 \de \mu(t)\,.
\end{align*}
Since $(\phi(S_\delta))^4 = (6A)^4$ and $(\tau(S_\delta))^3 = (4B)^3$, it thus remains to prove that $(6A)^4 \leq (4B)^3$, but this follows from Hölder inequality with $p=4/3$ and $q=4$ as follows
\begin{align*}
  (6A)^4
	&   =  6^4 \, \left(\frac{1}{3}\right)^4 \left( \int_{(0,1]} t^3 \, g(t)^3 \cdot 1 \de \mu(t) \right)^4
	   \leq 6^4 \, \left(\frac{1}{3}\right)^4 \left( \int_{(0,1]} t^4 \, g(t)^4 \de \mu(t) \right)^3 \cdot \mu((0,1])
	\\
	&   =  2^4 \, (2B)^3 \cdot \int_{(0,1]} \frac{-g'(t)}{g(t)^3} \de \lambda(t)
        =  2^4 \, (2B)^3 \cdot \frac{1}{2} \; \left(\frac{1}{g(1)^2} - \frac{1}{g(0)^2} \right)
      \leq 2^4 \, (2B)^3 \cdot \frac{1}{2}\,.
\end{align*}
Thus, $(6A)^4 \leq 2^4 \, (2B)^3 \cdot \frac{1}{2} = (4B)^3$. This proves the desired inequality.

Finally, the lower inequality is sharp for the copulas \(S_{u_a}\) and the upper inequality is sharp for the copulas \(S_{l_a}\) due to Example \ref{ex:uala}.
\end{proof}

The next result characterizes the set of copulas for which the lower inequality in Theorem \ref{Thm:tauphi} is sharp.

\begin{corollary} \label{tauphi:sharp}
Within the class of lower semilinear copulas, 
the identity \(\tau(S_\delta)=\phi(S_\delta)\) is attained exclusively for the copulas of the form \(S_{u_a}\) with copula diagonal $u_a$.
\end{corollary}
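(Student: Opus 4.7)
The plan is to revisit the proof of the lower inequality in Theorem \ref{Thm:tauphi} and track when every inequality becomes an equality. Recall that the derivation produced
\[
  \phi(S_\delta) - \tau(S_\delta)
  = 2 \int_{[0,1]} (t - \delta(t)) \left(\frac{2\delta(t)}{t} - 1\right) \de\lambda(t),
\]
and the non-negativity of this expression was obtained by using only one inequality, namely the right-hand bound $t\,\delta'(t) \leq 2\,\delta(t)$ in \eqref{Def:LSL2}, combined with an integration-by-parts identity that vanishes thanks to the boundary values $\delta(0)=0$ and $\delta(1)=1$.

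First I would isolate the single point where slack can appear. Writing the chain as
\[
  \phi(S_\delta) - \tau(S_\delta)
  = 2 \int_{[0,1]} (t-\delta(t)) \left(\frac{2\delta(t)}{t} - \delta'(t)\right) \de\lambda(t)
    + 2 \int_{[0,1]} (t-\delta(t))(\delta'(t) - 1) \de\lambda(t),
\]
the second integral equals $0$ by integration by parts regardless of $\delta$, so $\tau(S_\delta) = \phi(S_\delta)$ holds if and only if the first integrand vanishes $\lambda$-almost everywhere. Since both factors $t - \delta(t) \geq 0$ (from \eqref{Def:LSLDiagonal}) and $\frac{2\delta(t)}{t} - \delta'(t) \geq 0$ (from \eqref{Def:LSL2}) are non-negative, their product is zero $\lambda$-a.e.\ if and only if, for $\lambda$-almost every $t \in [0,1]$, either $\delta(t) = t$ or $t\,\delta'(t) = 2\,\delta(t)$.

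This dichotomy is precisely the characterization furnished by Lemma \ref{lem:ua}\ref{lem:ua1}, which allows me to conclude that $\delta = u_a$ for some $a \in [0,1]$. Conversely, Example \ref{ex:uala} shows $\tau(S_{u_a}) = 1 - a^2 = \phi(S_{u_a})$, establishing that each $u_a$ does attain equality, so the characterization is exact. I do not anticipate any real obstacle here; the only care needed is to make the decomposition above explicit (so that the integration-by-parts contribution is separated cleanly) and to invoke Lemma \ref{lem:ua}\ref{lem:ua1} correctly, which already packages the a.e.\ dichotomy into the desired form $\delta = u_a$.
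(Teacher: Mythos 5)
Your proposal is correct and follows essentially the same route as the paper's proof: decompose $\phi(S_\delta)-\tau(S_\delta)$ into the integral of $2(t-\delta(t))\bigl(\tfrac{2\delta(t)}{t}-\delta'(t)\bigr)$ plus an integration-by-parts term that vanishes identically, argue that the product of two non-negative factors is zero $\lambda$-a.e.\ iff one of them vanishes a.e., and invoke Lemma \ref{lem:ua}\ref{lem:ua1}. No gaps.
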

\begin{proof}
According to the proof of Theorem \ref{Thm:tauphi}, 
\(\tau(S_\delta) = \phi(S_\delta)\) if and only if 
\begin{align*}
  0
  & = 2 \, \int_{[0,1]} (t-\delta(t)) \, \left(\frac{2\delta(t)}{t}-1\right) \de \lambda(t) + 2 \, \int_{[0,1]} (t-\delta(t)) \, (1-\delta'(t)) \de \lambda(t)
  \\
  & = 2 \, \int_{[0,1]} (t-\delta(t)) \, \left(\frac{2\delta(t)}{t} - \delta'(t)\right) \de \lambda(t)\,.
\end{align*}
Since both factors are non-negative, the term is $0$ if and only if, for $\lambda$-almost every $t \in [0,1]$,
either 
$t=\delta(t)$ or $2\,\delta(t) = t\,\delta'(t)$, but this is equivalent to \(\delta=u_a\) for some \(a \in [0,1]\) according to Lemma \ref{lem:ua}.
\end{proof}

In contrast, as stated in the proof of Theorem \ref{Thm:tauphi}, the upper inequality is sharp for copulas \(S_{l_a}\).

The proof of the following corollary is conducted analogously to \cite[Theorem 5.7]{maislinger2025} and is therefore omitted.

\begin{corollary}
The \(\tau\)-\(\phi\) region $\Omega^{\rm LSL}_{\tau \text{,}\phi}$ is convex and compact.
\end{corollary}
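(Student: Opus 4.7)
The plan is to mirror the argument for the $\tau$-$\rho$ region in \cite[Theorem 5.7]{maislinger2025}, separating the two assertions.

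\emph{Compactness.} The class $\mathcal{D}^{\rm LSL}$ is compact under the sup-norm (as noted in the introduction), and the map $\delta \mapsto S_\delta$ defined by \eqref{Def:LSL} is continuous (sup-norm to sup-norm). The functional $\delta \mapsto \phi(S_\delta) = 6\int_{[0,1]} \delta \de \lambda - 2$ is trivially continuous, while $\delta \mapsto \tau(S_\delta) = 4\int_{[0,1]} \delta^2(t)/t \de \lambda(t) - 1$ is continuous as well: since $\delta(t) \leq t$, the integrand $\delta^2(t)/t$ is uniformly dominated by $t \in L^1([0,1])$, and uniform convergence $\delta_n \to \delta$ passes into convergence of the integrals by dominated convergence. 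Hence $\delta \mapsto (\tau(S_\delta), \phi(S_\delta))$ is continuous, and $\Omega^{\rm LSL}_{\tau,\phi}$ is compact as the continuous image of a compact set.

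\emph{Convexity.} The set $R := \{(x,y) \in [0,1]^2 : x \leq y \leq x^{3/4}\}$ is convex, being the intersection of the half-plane $\{y \geq x\}$ with the hypograph of the concave function $x \mapsto x^{3/4}$ (concavity follows from $0 < 3/4 < 1$). Theorem \ref{Thm:tauphi} gives the inclusion $\Omega^{\rm LSL}_{\tau,\phi} \subseteq R$, so convexity of $\Omega^{\rm LSL}_{\tau,\phi}$ follows immediately once the reverse inclusion is established.

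To attain an arbitrary $(x_0, y_0) \in R$, the two extreme vertical boundary points $(x_0, x_0)$ and $(x_0, x_0^{3/4})$ are realized by $u_a$ with $a = \sqrt{1-x_0}$ and by $l_b$ with $b = x_0^{1/4}$, respectively (Example \ref{ex:uala}). For interior values I would construct a continuous two-parameter family of admissible diagonals in $\mathcal{D}^{\rm LSL}$, built by gluing a piece of $l_b$ on an initial interval $[0,c]$ to an appropriate affine piece of type $u_a$ on $[c,1]$, with the matching constants chosen so that the shape constraints \eqref{Def:LSL2} hold throughout. A standard continuity argument (intermediate value theorem applied coordinatewise, or a Brouwer-degree argument on the parameter rectangle) then shows that the image of this family in the $(\tau,\phi)$-plane covers every point of $R$.

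The main obstacle is exactly this constructive step: because $\tau$ is quadratic in $\delta$, a naive linear interpolation $\alpha \delta_1 + (1-\alpha)\delta_2$ linearises $\phi$ but, by Jensen applied to $t \mapsto t^2$, strictly underestimates the linear interpolant of $\tau$. Plain convex combinations in $\mathcal{D}^{\rm LSL}$ therefore fail to sweep out the vertical segments in $R$, and the piecewise gluing scheme must be tuned to keep $\tau$ constant while letting $\phi$ range monotonically over $[x_0, x_0^{3/4}]$. This mirrors the delicate part of \cite[Theorem 5.7]{maislinger2025}, which transfers to the present setting by replacing $\rho$-computations with the corresponding $\phi$-computations and invoking Example \ref{ex:uala} in place of its $\tau$-$\rho$ analogue.
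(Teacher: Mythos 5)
Your compactness argument is fine (one small point: the introduction asserts compactness of $\mathcal{C}^{\rm LSL}$ under the sup-norm, not of $\mathcal{D}^{\rm LSL}$; the latter follows from uniform boundedness and the $2$-Lipschitz property via Arzel\`a--Ascoli, or from the former via the correspondence $\delta \leftrightarrow S_\delta$). The genuine gap is in the convexity part. Reducing convexity to the identity $\Omega^{\rm LSL}_{\tau,\phi} = \{(x,y): x \le y \le x^{3/4}\}$ is legitimate, but the inclusion of this set into $\Omega^{\rm LSL}_{\tau,\phi}$ is precisely the nontrivial step, and you do not prove it: the ``two-parameter family of glued diagonals'' is never written down, the shape constraints \eqref{Def:LSL2} are never verified for it, and the covering claim is delegated to an unspecified intermediate-value or Brouwer-degree argument. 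You yourself call this ``the main obstacle''; as it stands, the convexity half is a plan rather than a proof. (The paper instead transfers the direct convexity argument of \cite[Theorem~5.7]{maislinger2025}; convexity together with Theorem~\ref{Thm:tauphi} and the boundary curves from Example~\ref{ex:uala} then identifies the region.)

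The obstacle disappears if you sweep \emph{horizontal} rather than vertical sections. Fix $(x_0,y_0)$ with $x_0 \le y_0 \le x_0^{3/4}$, equivalently $y_0^{4/3} \le x_0 \le y_0$, and choose $a=\sqrt{1-y_0}$, $b=y_0^{1/3}$, so that by Example~\ref{ex:uala}
\[
  \phi(S_{u_a}) = \phi(S_{l_b}) = y_0, \qquad \tau(S_{l_b}) = y_0^{4/3}, \qquad \tau(S_{u_a}) = y_0 .
\]
Since $\mathcal{D}^{\rm LSL}$ is convex, $\delta_\alpha := \alpha\, u_a + (1-\alpha)\, l_b \in \mathcal{D}^{\rm LSL}$ for all $\alpha \in [0,1]$; since $\phi$ is affine in the diagonal, $\phi(S_{\delta_\alpha}) = y_0$ for every $\alpha$; and $\alpha \mapsto \tau(S_{\delta_\alpha})$ is a quadratic, hence continuous, function running from $y_0^{4/3}$ to $y_0$, so it attains $x_0$ by the intermediate value theorem. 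This yields $(x_0,y_0) \in \Omega^{\rm LSL}_{\tau,\phi}$ with no gluing and no degree theory. The reason your version is hard and this one is easy is exactly the asymmetry you observed: $\phi$ is linear in $\delta$ while $\tau$ is not, so one should hold $\phi$ fixed along convex combinations and let $\tau$ move, not the other way around.
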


\subsection{The exact \(\phi\)-\(\rho\) region for lower semilinear copulas}

We conclude this section by determining the \(\phi\)-\(\rho\) region 
$\Omega^{\rm LSL}_{\phi \text{,}\rho}:=\{(\phi(S_\delta), \rho(S_\delta)): S_\delta \in \mathcal{C}^{\rm LSL}\}$ and prove that it coincides with the set 
\[\{(x,y) \in [0,1]^2, x^\frac{4}{3}\leq y \leq 1-(1-x)^{\frac{3}{2}}\}.\]
The inequalities immediately follow from Theorems \ref{Thm:taurho} and \ref{Thm:tauphi}:
On the one part
\[ 
  \phi(S_\delta)^{4/3} \leq \tau(S_\delta) \leq \rho(S_\delta)\,.
\]
This inequality is sharp for the copulas \(S_{l_a}\) due to Example \ref{ex:uala}.
On the other hand, 
\[
  \rho(S_\delta) \leq 1-(1-\tau(S_\delta))^\frac{3}{2} \leq 1-(1-\phi(S_\delta))^\frac{3}{2}.
\]
This inequality is sharp for the copulas \(S_{u_a}\) due to Example \ref{ex:uala}.
We summarize our recent findings.

\begin{theorem} \label{Thm:phirho}
The inequalities
\begin{align*}
  \phi(S_\delta)^{4/3} \leq \rho(S_\delta) \leq 1-(1-\phi(S_\delta))^\frac{3}{2}
\end{align*}
hold for every \(S_\delta \in \mathcal{C}^{\rm LSL}\). Moreover, both inequalities are sharp.
\end{theorem}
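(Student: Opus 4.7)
The plan is to derive both bounds as corollaries of the already-established Theorems~\ref{Thm:taurho} and~\ref{Thm:tauphi}, using Kendall's $\tau$ as an intermediate pivot. No new estimation is needed; the argument is a monotone chain of the previously proven inequalities, supplemented by sharpness examples drawn from Example~\ref{ex:uala}. Before chaining, I would just note that for any $S_\delta\in\mathcal{C}^{\rm LSL}$ all three quantities $\tau(S_\delta)$, $\rho(S_\delta)$, $\phi(S_\delta)$ lie in $[0,1]$ (this is immediate from $t^2\le\delta(t)\le t$ and the formulas \eqref{LSL:MoC}), so the monotone power maps used below preserve the direction of the inequalities.

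For the lower bound I would start from the upper part of Theorem~\ref{Thm:tauphi}, namely $\phi(S_\delta)\le \tau(S_\delta)^{3/4}$. Raising both sides to the power $4/3$ (which is order-preserving on $[0,1]$) yields $\phi(S_\delta)^{4/3}\le \tau(S_\delta)$. Combining this with the lower inequality $\tau(S_\delta)\le \rho(S_\delta)$ from Theorem~\ref{Thm:taurho} gives $\phi(S_\delta)^{4/3}\le \rho(S_\delta)$.

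For the upper bound I would use the lower inequality $\tau(S_\delta)\le \phi(S_\delta)$ from Theorem~\ref{Thm:tauphi}. Then $0\le 1-\phi(S_\delta)\le 1-\tau(S_\delta)$, and since $x\mapsto x^{3/2}$ is non-decreasing on $[0,\infty)$ we get $(1-\phi(S_\delta))^{3/2}\le (1-\tau(S_\delta))^{3/2}$, hence $1-(1-\tau(S_\delta))^{3/2}\le 1-(1-\phi(S_\delta))^{3/2}$. Chaining this with the upper inequality $\rho(S_\delta)\le 1-(1-\tau(S_\delta))^{3/2}$ from Theorem~\ref{Thm:taurho} produces the desired bound.

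Sharpness is verified directly from the table in Example~\ref{ex:uala}. For $\delta=l_a$ one has $\phi(S_{l_a})=a^3$ and $\rho(S_{l_a})=a^4$, so $\phi(S_{l_a})^{4/3}=(a^3)^{4/3}=a^4=\rho(S_{l_a})$, showing that the lower inequality is attained. For $\delta=u_a$ one has $\phi(S_{u_a})=1-a^2$ and $\rho(S_{u_a})=1-a^3$, so $1-(1-\phi(S_{u_a}))^{3/2}=1-(a^2)^{3/2}=1-a^3=\rho(S_{u_a})$, showing that the upper inequality is attained. There is no genuine obstacle here: the only point requiring a moment of care is the direction of the monotone power maps, and since we are working on $[0,1]$ this is automatic.
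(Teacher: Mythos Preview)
Your proof is correct and mirrors the paper's own argument essentially verbatim: the paper also chains $\phi(S_\delta)^{4/3}\le\tau(S_\delta)\le\rho(S_\delta)$ and $\rho(S_\delta)\le 1-(1-\tau(S_\delta))^{3/2}\le 1-(1-\phi(S_\delta))^{3/2}$ via Theorems~\ref{Thm:taurho} and~\ref{Thm:tauphi}, and verifies sharpness using $l_a$ and $u_a$ from Example~\ref{ex:uala}.
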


Convexity of the region $\Omega^{\rm LSL}_{\phi \text{,}\rho}$ immediately follows from the fact that both \(\phi\) and \(\rho\) (unlike \(\tau\)) are compatible with convex combinations.

\begin{corollary}
The \(\phi\)-\(\rho\) region $\Omega^{\rm LSL}_{\phi \text{,}\rho}$ is convex and compact.
\end{corollary}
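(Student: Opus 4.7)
The plan is to treat convexity and compactness separately, in both cases reducing the claim to properties that are essentially immediate from the formulas in \eqref{LSL:MoC} and the structure of the underlying function class \(\mathcal{D}^{\rm LSL}\).

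For convexity, I would exploit the fact that both \(\phi\) and \(\rho\), as displayed in \eqref{LSL:MoC}, are affine functionals of the diagonal \(\delta\): each is of the form (constant times an integral that is linear in \(\delta\)) plus a constant. Hence, for any \(\delta_1,\delta_2 \in \mathcal{D}^{\rm LSL}\) and any \(t \in [0,1]\), the convex combination \(\delta_t := t\,\delta_1 + (1-t)\,\delta_2\) again lies in \(\mathcal{D}^{\rm LSL}\) (this uses the convexity of \(\mathcal{D}^{\rm LSL}\) already recorded in the introduction), and
\begin{align*}
  \bigl(\phi(S_{\delta_t}),\rho(S_{\delta_t})\bigr)
  = t\,\bigl(\phi(S_{\delta_1}),\rho(S_{\delta_1})\bigr) + (1-t)\,\bigl(\phi(S_{\delta_2}),\rho(S_{\delta_2})\bigr).
\end{align*}
Thus the set of attainable values is stable under convex combinations, i.e.\ \(\Omega^{\rm LSL}_{\phi,\rho}\) is convex. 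This is precisely the informal comment already made before the corollary; the only point worth emphasizing is that the argument works for \(\phi\) and \(\rho\) simultaneously because both are affine in \(\delta\), a property \(\tau\) (being quadratic in \(\delta\)) does not share.

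For compactness, I would invoke the result from \cite{maislinger2025} that \(\mathcal{C}^{\rm LSL}\) is compact in the sup-norm, and then show that the map
\(\Phi: \mathcal{C}^{\rm LSL} \to \mathbb{R}^2\), \(S_\delta \mapsto (\phi(S_\delta),\rho(S_\delta))\) is continuous. Sup-norm convergence of a sequence \(S_{\delta_n} \to S_\delta\) on \([0,1]^2\) is equivalent to sup-norm convergence \(\delta_n \to \delta\) on the diagonal, so the integrands \(\delta_n(t)\) and \(t\,\delta_n(t)\) appearing in \eqref{LSL:MoC} are uniformly bounded by \(1\) and converge pointwise to \(\delta(t)\) and \(t\,\delta(t)\). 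Dominated convergence yields \(\Phi(S_{\delta_n}) \to \Phi(S_\delta)\), hence continuity, and the continuous image of a compact set is compact.

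The main (minor) obstacle is purely bookkeeping: ensuring that no step requires convergence of derivatives \(\delta_n'\), since derivatives are not stable under sup-norm limits. This is avoided because \(\phi\) and \(\rho\) are expressed directly in terms of \(\delta\), never in terms of \(\delta'\); had a similar claim been made for \(\tau\)-based regions, a more delicate argument would be necessary. Combining the two steps yields the stated conclusion.
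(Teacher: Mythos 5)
Your proposal is correct and follows essentially the same route the paper intends: convexity via the observation that $\phi$ and $\rho$ are affine in $\delta$ (equivalently, compatible with convex combinations, which is exactly the remark the paper makes before the corollary), and compactness via continuity of $S_\delta \mapsto (\phi(S_\delta),\rho(S_\delta))$ on the sup-norm-compact set $\mathcal{C}^{\rm LSL}$, which is the argument behind the cited \cite[Theorem 5.7]{maislinger2025}. Your added remark that no convergence of derivatives $\delta_n'$ is needed is a correct and worthwhile clarification, since that is precisely where the $\tau$-based regions require more care.
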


\begin{figure}[h]
  \centering
  \includegraphics[width=0.45\textwidth]{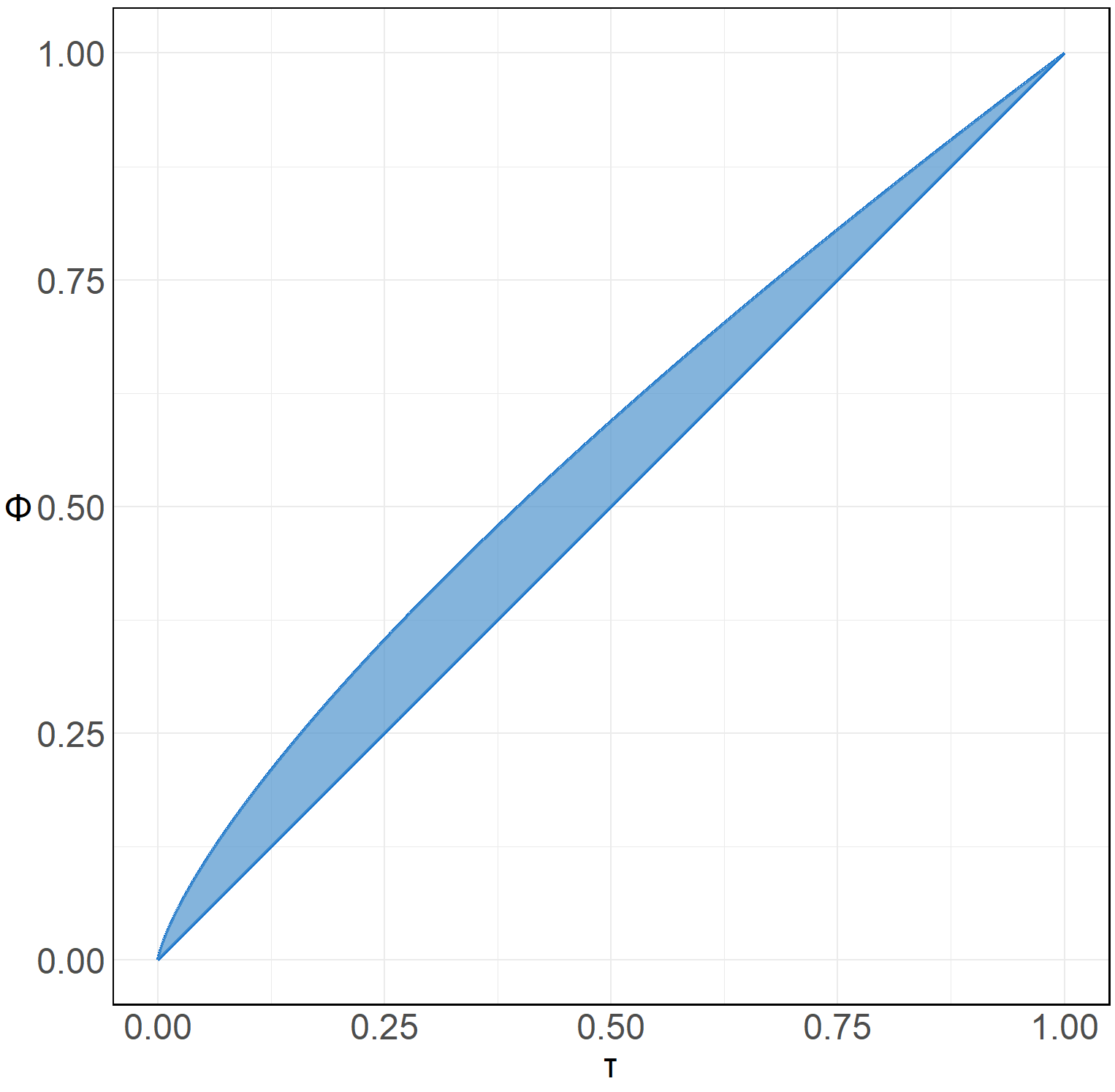}
  \includegraphics[width=0.445\textwidth]{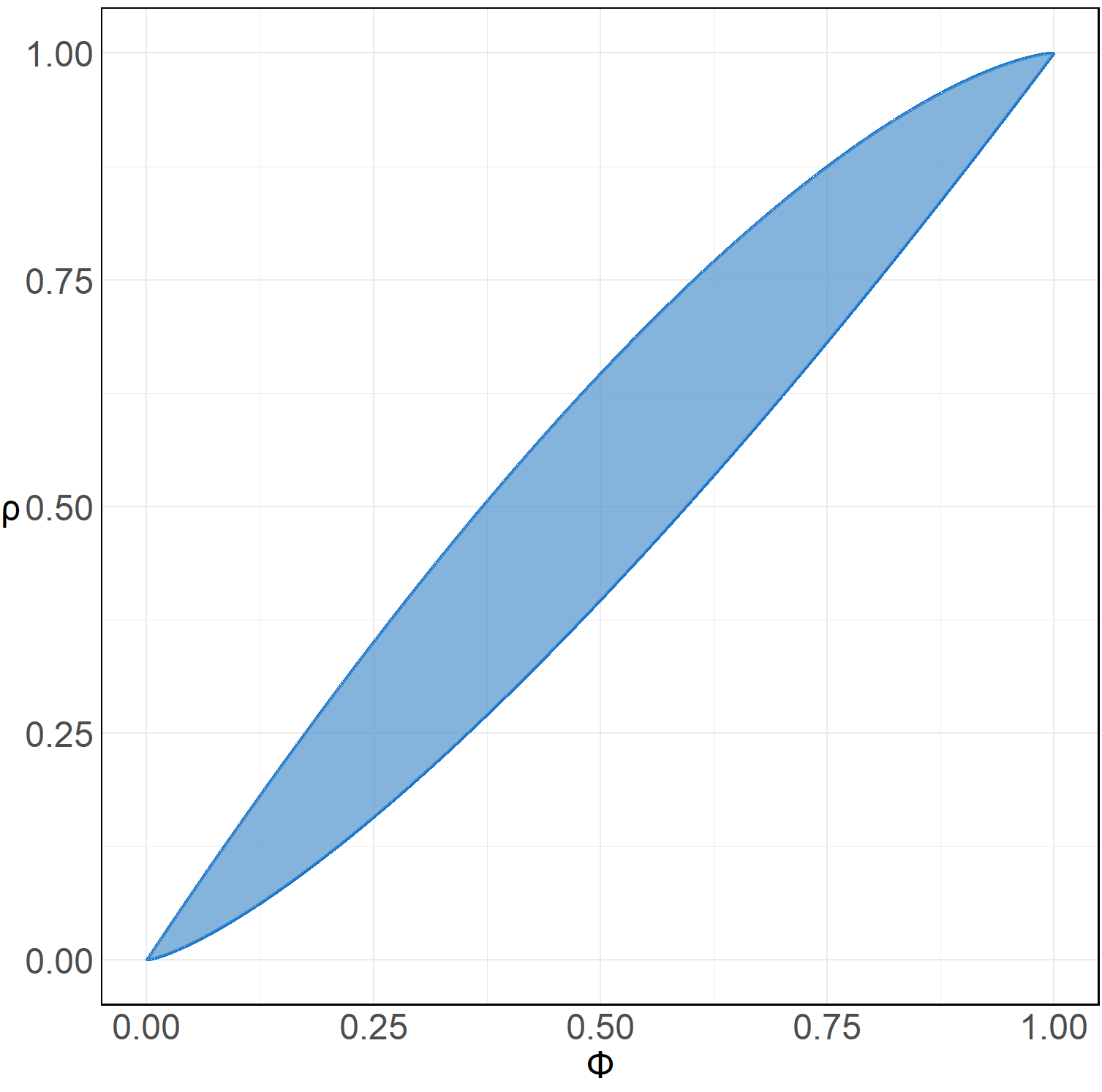}
  \caption{The \(\tau\)-\(\phi\) region \( \Omega^{\rm LSL}_{\tau \text{,}\phi}\) (left panel) and the \(\phi\)-\(\rho\) region \( \Omega^{\rm LSL}_{\Phi \text{,}\rho}\) (right panel).}
\end{figure}

\begin{remark}
The established \(\tau\)-\(\rho\), \(\tau\)-\(\phi\), and \(\phi\)-\(\rho\) regions are valid for lower semilinear copulas $S_\delta \in \mathcal{C}^{\rm LSL}$. 
However, for copulas whose copula diagonal $\delta$ belongs to the class $\mathcal{D}^{\rm LSL}$ but which are not themselves lower semilinear, the corresponding inequalities may not hold.
As a counterexample, consider the Marshall-Olkin copulas discussed in Example \ref{ex:MO}: 
For $\alpha = 1/10$ and $\beta = 1$, we have $\phi(M_{\alpha,\beta}) = 2/29 < 1/10 = \tau(M_{\alpha,\beta})$. This contradicts Theorem \ref{Thm:tauphi}.
\end{remark}

\section{Pairwise comparison of Chatterjee's rank correlation and measures of concordance}

It has been shown in \cite{siburg2013, fuchs2023JMVA} that Chatterjee's rank correlation for a copula $C$ in \eqref{Def:Chatterjee} equals Spearman's footrule of the so-called Markov product of $C$ and its transpose $C^T$, namely
\begin{align*}
  (C^T \ast C)(u,v) 
  & := \int_{[0,1]} \partial_1 C(t,u) \cdot \partial_1 C(t,v) \de \lambda(t)\,;
\end{align*}
we refer to \cite{durante2016} for more information on the partial derivatives of $C$ and the Markov product $C^T \ast C$.
Interestingly, the Markov product of a lower semilinear copula is again lower semilinear; \cite[Theorem 4.1]{maislinger2025}.
Since lower semilinear copulas are symmetric, we therefore have
\begin{align} \label{Rep:Chatterjee}
  \xi(S_\delta) 
  & = \phi(S_\delta \ast S_\delta)
    = 6 \, \int_{[0,1]} (S_\delta \ast S_\delta)(t,t) \de \lambda(t) - 2\,,
\end{align}
and we denote by $\delta^\ast (t) := (S_\delta \ast S_\delta)(t,t)$ the copula diagonal of the Markov product $S_\delta \ast S_\delta$. Then $\delta^\ast \in \mathcal{D}^{\rm LSL}$; \cite[Theorem 4]{durante2008}.
This implies, in particular, that Chatterjee's rank correlation $\xi(S_\delta)$ of $S_\delta$ depends solely on the diagonal entries of the associated Markov product; we refer the reader to \cite{fuchs2025DeMo} for more information on the Markov product and its role in quantifying directed dependence.

\subsection{The exact \(\tau\)-\(\xi\) region for lower semilinear copulas}

The next theorem provides an explicit representation of Chatterjee's rank correlation in terms of Kendall's tau and an additional term incorporating the shape constraints in  \eqref{Def:LSL2}.

\begin{theorem}\label{Rep:Chatterjee.2}
For \(S_\delta \in \mathcal{C}^{\rm LSL}\), Chatterjee's rank correlation fulfills
\begin{align*}
  \xi(S_\delta) 
  & = \tau(S_\delta)
     - 2 \, \int_{(0,1]} \frac{(t \, \delta^\prime(t) - \delta(t)) \, (2 \, \delta(t) - t \, \delta^\prime(t))}{t} \de \lambda(t)\,.
\end{align*}
\end{theorem}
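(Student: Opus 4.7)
The plan is to compute $\delta^\ast(t) = (S_\delta \ast S_\delta)(t,t)$ explicitly in terms of $\delta$ and $\delta^\prime$, substitute into the representation \eqref{Rep:Chatterjee}, apply Fubini, and then massage the resulting integrand using the identity $h + k = \delta$ with $h(t) := t\delta^\prime(t) - \delta(t)$ and $k(t) := 2\delta(t) - t\delta^\prime(t)$.

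First I would compute the partial derivative of $S_\delta$ with respect to its first argument from the piecewise definition \eqref{Def:LSL}: for fixed $t$ one has $\partial_1 S_\delta(s,t) = \delta(t)/t$ if $s < t$ and $\partial_1 S_\delta(s,t) = t\,(s\delta^\prime(s) - \delta(s))/s^2$ if $s > t$. Since $S_\delta$ is symmetric, the Markov product formula from the excerpt gives
\begin{align*}
\delta^\ast(t)
& = \int_0^t \left(\frac{\delta(t)}{t}\right)^2 \de\lambda(s) + \int_t^1 \left(\frac{t\,(s\delta^\prime(s) - \delta(s))}{s^2}\right)^2 \de\lambda(s)
  = \frac{\delta^2(t)}{t} + t^2 \int_t^1 \frac{(s\delta^\prime(s) - \delta(s))^2}{s^4} \de\lambda(s)\,.
\end{align*}

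Next I would plug this into $\xi(S_\delta) = 6\int_{[0,1]} \delta^\ast(t) \de\lambda(t) - 2$ and swap the order of integration in the double integral; the inner $t$-integral $\int_0^s t^2 \de\lambda(t) = s^3/3$ collapses the $s^{-4}$ factor neatly. The outcome is the intermediate representation
\begin{align*}
\xi(S_\delta)
& = 6\int_{(0,1]} \frac{\delta^2(t)}{t} \de\lambda(t) + 2 \int_{(0,1]} \frac{(t\delta^\prime(t) - \delta(t))^2}{t} \de\lambda(t) - 2\,,
\end{align*}
which already makes the first term look like $\tau(S_\delta)$ up to a constant, and isolates the whole contribution coming from the shape constraints into the squared term $h^2$.

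The final and most delicate step is to convert $h^2$ into the product $h\cdot k$ that appears in the theorem. Observing $h + k = \delta$ gives $h^2 = h(h+k) - hk = h\delta - hk$, so $h^2/t = \delta^\prime \delta - \delta^2/t - hk/t$. Integrating and using the boundary identity $\int_{[0,1]} \delta^\prime(t)\delta(t) \de\lambda(t) = \tfrac{1}{2}(\delta(1)^2 - \delta(0)^2) = 1/2$ (a single integration by parts, valid since $\delta$ is absolutely continuous with $\delta(0)=0$, $\delta(1)=1$) yields
\begin{align*}
2\int_{(0,1]} \frac{h^2(t)}{t} \de\lambda(t)
& = 1 - 2\int_{(0,1]} \frac{\delta^2(t)}{t} \de\lambda(t) - 2\int_{(0,1]} \frac{h(t)k(t)}{t} \de\lambda(t)\,.
\end{align*}
Substituting back collapses the two $\delta^2/t$ integrals into $4\int \delta^2/t \de\lambda - 1 = \tau(S_\delta)$, leaving exactly the claimed correction $-2\int hk/t\de\lambda$. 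The main obstacle is not the integration theory — both uses of Fubini and the integration by parts are routine given the Lipschitz continuity of $\delta$ — but rather recognizing the clean algebraic identity $h+k = \delta$ that re-expresses the squared shape-constraint term as the signed product naturally reflecting the two-sided inequality \eqref{Def:LSL2}.
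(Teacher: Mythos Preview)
Your proof is correct and follows essentially the same route as the paper's: compute the diagonal $\delta^\ast$ of the Markov product (the paper cites this from \cite{maislinger2025} while you derive it directly from the partial derivatives, but the formula is identical), apply Fubini to collapse the double integral, and then use $\int_{[0,1]}\delta'\delta\,\de\lambda=\tfrac12$ together with algebraic factorisation to reach the product form. Your organisation of the algebra via the identity $h+k=\delta$ is a slightly cleaner bookkeeping device than the paper's expand-and-factor step, but the two arguments are the same in substance.
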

\begin{proof}
Combining the above representation in \eqref{Rep:Chatterjee} and the diagonal expression of the Markov product for lower semilinear copulas presented in \cite[Theorem 4.1]{maislinger2025} yields
\begin{align*}
  \xi(S_\delta)
  & = 6 \, \int_{(0,1]} \frac{\delta(t)^2}{t} + t^2 \, 
        \int_{[t,1]} \left(\left(\frac{\delta(s)}{s}\right)^\prime \right)^2 \de \lambda(s) \de \lambda(t) - 2 
  \\
  & = 6 \, \int_{(0,1]} \frac{\delta(t)^2}{t} \de \lambda(t)
     + 6 \, \int_{(0,1]} \left( \int_{[0,s]} t^2 \de \lambda(t) \right) \left(\left(\frac{\delta(s)}{s}\right)^\prime \right)^2 \de \lambda(s) - 2 
  \\
  & = 4 \, \int_{(0,1]} \frac{\delta(t)^2}{t} \de \lambda(t) - 1
     + 6 \, \int_{(0,1]} \frac{s^3}{3} \left( \frac{s \, \delta^\prime(s) - \delta(s)}{s^2} \right)^2 \de \lambda(s) + 2 \, \int_{(0,1]} \frac{\delta(t)^2}{t} \de \lambda(t) - 1
  \\
  & = \tau(S_\delta)
     + 2 \, \int_{(0,1]} \frac{(s \, \delta^\prime(s) - \delta(s))^2}{s} \de \lambda(s) + 2 \, \int_{(0,1]} \frac{\delta(t)^2}{t} \de \lambda(t) - 2 \, \int_{(0,1]} \delta^\prime(t) \, \delta(t) \de \lambda(t)
  \\
  & = \tau(S_\delta)
     + 2 \, \int_{(0,1]} \frac{s^2 \, (\delta^\prime(s))^2 - 3 \, s \, \delta^\prime(s) \, \delta(s) + 2 \, \delta(s)^2}{s} \de \lambda(s) 
  \\
  & = \tau(S_\delta)
     - 2 \, \int_{(0,1]} \frac{(s \, \delta^\prime(s) - \delta(s)) \, (2 \, \delta(s)) - s \, \delta^\prime(s))}{s} \de \lambda(s)\,, 
\end{align*}
where the second identity follows from Fubini's theorem and the fourth identity is due to integration by parts \cite{billingsley-1995}.
\end{proof}

As a direct consequence of the shape constraints in \eqref{Def:LSL2}, 
the second term on the right-hand side of Theorem \ref{Rep:Chatterjee.2} is non-negative.
Consequently, Chatterjee's rank correlation never exceeds Kendall's tau.
We now determine the exact $\tau$-$\xi$ region $\Omega_{\tau,\xi}^{\rm LSL} := \{(\tau(S_\delta),\xi(S_\delta)): S_\delta \in \mathcal{C}^{\rm LSL}\}$.

\begin{theorem} \label{thm:markovtau}
The inequalities 
\[
  \frac{2 \, \tau(S_\delta)^2}{1+\tau(S_\delta)}
  \leq \xi(S_\delta)
  \leq \tau(S_\delta)
\]
hold for all \(S_\delta \in \mathcal{C}^{\rm LSL}\). Moreover, both inequalities are sharp.
\end{theorem}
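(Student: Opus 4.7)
The plan is to use Theorem \ref{Rep:Chatterjee.2} as the starting point. The upper bound $\xi(S_\delta) \leq \tau(S_\delta)$ is immediate from this representation, since by \eqref{Def:LSL2} the factors $t\,\delta'(t) - \delta(t)$ and $2\,\delta(t) - t\,\delta'(t)$ are both non-negative $\lambda$-almost everywhere, so the subtracted integral is non-negative. Sharpness follows by taking $S_{u_a}$ or $S_{l_a}$ from Example \ref{ex:uala}: by Lemma \ref{lem:ua} each of these diagonals satisfies either $t\,\delta'(t) = \delta(t)$ or $t\,\delta'(t) = 2\,\delta(t)$ $\lambda$-almost everywhere, which forces the integrand to vanish.

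For the lower bound, I would first simplify the representation in Theorem \ref{Rep:Chatterjee.2}. Expanding the product yields
\begin{align*}
\frac{(t\,\delta'(t) - \delta(t))\,(2\,\delta(t) - t\,\delta'(t))}{t} = 3\,\delta'(t)\,\delta(t) - t\,(\delta'(t))^2 - \frac{2\,\delta(t)^2}{t}.
\end{align*}
Integration by parts (using $\delta(0)=0$ and $\delta(1)=1$) gives $\int_{(0,1]} \delta'\,\delta\,\de\lambda = 1/2$, and by definition of $\tau$ we have $\int_{(0,1]} \delta^2/t\,\de\lambda = (1+\tau(S_\delta))/4$. Substituting into Theorem \ref{Rep:Chatterjee.2} and collecting terms yields the clean identity
\begin{align*}
\xi(S_\delta) \;=\; 2\,\tau(S_\delta) - 2 + 2\,\int_{(0,1]} t\,(\delta'(t))^2\,\de\lambda(t).
\end{align*}

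The key step is now to apply the Cauchy–Schwarz inequality to the pair of functions $\sqrt{t}\,\delta'(t)$ and $\delta(t)/\sqrt{t}$:
\begin{align*}
\frac{1}{4} \;=\; \left(\int_{(0,1]} \delta'\,\delta\,\de\lambda\right)^2 \;\leq\; \int_{(0,1]} t\,(\delta')^2\,\de\lambda \;\cdot\; \int_{(0,1]} \frac{\delta^2}{t}\,\de\lambda \;=\; \int_{(0,1]} t\,(\delta'(t))^2\,\de\lambda \;\cdot\; \frac{1+\tau(S_\delta)}{4},
\end{align*}
hence $\int_{(0,1]} t\,(\delta'(t))^2\,\de\lambda \geq 1/(1+\tau(S_\delta))$. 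Plugging this into the reformulation and simplifying gives $\xi(S_\delta) \geq 2\tau - 2 + 2/(1+\tau) = 2\,\tau(S_\delta)^2/(1+\tau(S_\delta))$, as required. Equality in Cauchy–Schwarz forces $t\,\delta'(t) = c\,\delta(t)$ for some constant $c$, i.e.\ $\delta(t) = t^p$ with $p \in [1,2]$, which are precisely the power diagonals of Example \ref{ex:power}; a direct verification there gives $\xi(S_{\delta_p}) = (2-p)^2/p = 2\,\tau(S_{\delta_p})^2/(1+\tau(S_{\delta_p}))$, and since $p \mapsto \tau(S_{\delta_p}) = (2-p)/p$ traces out all of $[0,1]$, every point of the lower boundary is attained.

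I do not anticipate a serious obstacle. The crux is spotting the reformulation $\xi(S_\delta) = 2\tau(S_\delta) - 2 + 2\int t\,(\delta')^2\,\de\lambda$; once this form is in hand, the three natural $L^2$-type quantities $\int \delta'\,\delta$, $\int t\,(\delta')^2$, and $\int \delta^2/t$ combine through Cauchy–Schwarz with exactly the exponents needed to produce the rational bound $2\tau^2/(1+\tau)$. The only book-keeping requirement is to confirm that the algebraic reduction from Theorem \ref{Rep:Chatterjee.2} has not been mis-simplified, and that the boundary values $\delta(0)=0$ and $\delta(1)=1$ (which hold for any copula diagonal) legitimise the integration-by-parts step.
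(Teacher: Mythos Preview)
Your argument is correct. The upper bound and its sharpness match the paper exactly. For the lower bound you take a different---and arguably cleaner---route than the paper.

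The paper never derives your intermediate identity
\[
\xi(S_\delta) \;=\; 2\,\tau(S_\delta) - 2 + 2\int_{(0,1]} t\,(\delta'(t))^2\,\de\lambda(t).
\]
Instead it works directly with $B := \int (t\delta'-\delta)(2\delta-t\delta')/t\,\de\lambda$, introduces the function $g(t) = t\,\delta'(t)/\delta(t) \in [1,2]$ and the probability measure $\de\mu = A^{-1}\,(\delta^2/t)\,\de\lambda$ (where $A = \int \delta^2/t\,\de\lambda$), rewrites $B = A\int h(g)\,\de\mu$ with the concave quadratic $h(z) = (z-1)(2-z)$, and applies Jensen's inequality to bound $B \leq A\,h\!\big(\int g\,\de\mu\big) = A\,h(1/(2A))$. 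Unpacking this, Jensen for a concave quadratic is precisely the variance inequality $(E[g])^2 \leq E[g^2]$, which with the given $\mu$ reads $(1/(2A))^2 \leq A^{-1}\int t\,(\delta')^2\,\de\lambda$---exactly your Cauchy--Schwarz bound $\int t\,(\delta')^2 \geq 1/(1+\tau)$. So the two proofs rest on the same inequality; your version makes the $L^2$ structure explicit from the outset and avoids the auxiliary measure, while the paper's Jensen formulation keeps the factorisation $(g-1)(2-g)$ visible and ties the equality case (affine $g$, hence constant $g$) directly to the power diagonals.
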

\begin{proof}
The upper inequality is evident and sharp for diagonals $l_a$ and $u_a$ for any choice of $a \in [0,1]$ due to Lemma \ref{lem:ua}.
We now prove the lower inequality.
Therefore, define the two terms
\begin{align*}
  A 
  & := \int_{(0,1]} \frac{\delta(t)^2}{t} \de \lambda(t)
  & B
  & := \int_{(0,1]} \frac{(t \, \delta^\prime(t) - \delta(t)) \, (2 \, \delta(t) - t \, \delta^\prime(t))}{t} \de \lambda(t)\,.
\end{align*}
Then $\tau(S_\delta) = 4\, A - 1$ and $\xi(S_\delta) = \tau(S_\delta) - 2\,B$.
Further, define the function $g: (0,1] \to \mathbb{R}$ given by
\begin{align*}
  g(t) := \frac{t \, \delta^\prime(t)}{\delta(t)} \in [1,2] \quad \textrm{ due to \eqref{Def:LSL2}}\,,
\end{align*}
and the measure $\mu: \mathcal{B}((0,1]) \to [0,\infty]$ given by 
\begin{align*}
  \mu(S) 
  & := \frac{1}{A} \, \int_{S} \frac{\delta(t)^2}{t} \de \lambda(t)\,.
\end{align*}
Then, applying Jensen's inequality to the concave function $h: (0,\infty) \to \mathbb{R} $ such that $h(z) := (z-1) \, (2-z)$ yields
\begin{align*}
  B
  &   =  \int_{(0,1]} \frac{(g(t) \, \delta(t) - \delta(t)) \, (2 \, \delta(t) - g(t) \, \delta(t))}{t} \de \lambda(t)
  \\
  &   =  \int_{(0,1]} (g(t) -1) \, (2 - g(t)) \; \frac{\delta^2(t)}{t} \de \lambda(t)
      =  A \, \int_{(0,1]} h(g(t)) \de \mu(t)
  \\
  & \leq A \, h \left( \int_{(0,1]} g(t) \de \mu(t) \right)
      =  A \, h \left( \frac{1}{A} \, \int_{(0,1]} \delta^\prime(t) \, \delta(t) \de \lambda(t) \right)
      =  A \, h \left( \frac{1}{2A} \right)
  \\
  &   =  \frac{(4A-1) \, (1-2A)}{(4A-1) + 1}\,, 
\end{align*}
where the second last identity is due to integration by parts \cite{billingsley-1995}.
Therefore, 
\begin{align*}
  \xi(S_\delta) 
  &   =  \tau(S_\delta) - 2\,B
    \geq \tau(S_\delta) - 2 \; \frac{\tau(S_\delta) \, (1-2A)}{\tau(S_\delta) + 1}
      =  \frac{2 \, \tau(S_\delta)^2}{\tau(S_\delta) + 1}\,.
\end{align*}
It remains to show that the lower inequality is sharp as well.
Therefore, consider the copula diagonal $\delta_p$ discussed in Example \ref{ex:power}.
Straightforward calculation then yields
\begin{align*}
  \frac{2 \, \tau(S_\delta)^2}{1+\tau(S_\delta)}
  & = \frac{(2-p)^2}{p} = \xi(S_\delta)\,,
\end{align*}
where the last identity is due to Theorem \ref{Rep:Chatterjee.2}.
\end{proof}

\begin{figure}[h]
  \centering
  \includegraphics[width=0.45\textwidth]{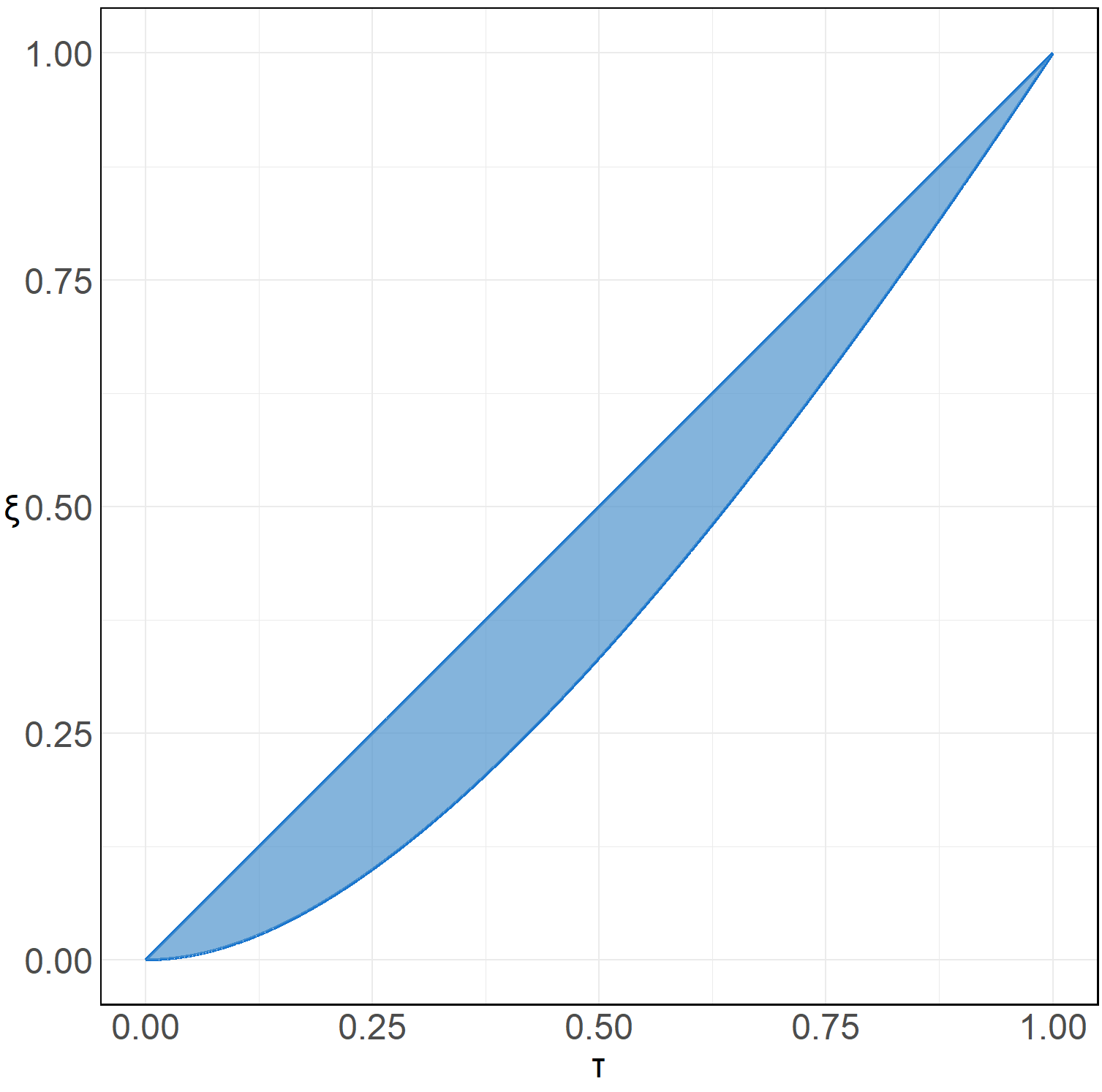}
\caption{The \(\tau\)-\(\xi\) region \( \Omega^{\rm LSL}_{\tau \text{,}\xi}\)}
\end{figure}

As mentioned in the proof of Theorem \ref{thm:markovtau}, the upper inequality is sharp for copula diagonals $l_a$ and $u_a$ and any choice of $a \in [0,1]$.
Theorem \ref{Rep:Chatterjee.2} offers an even more precise statement incorporating the shape constraints in \eqref{Def:LSL2}.

\begin{corollary} \label{thm:sharpineq}
Within the class of lower semilinear copula, the identity $\xi(S_\delta) = \tau(S_\delta)$ is attained if and only if, for $\lambda$-almost every $t \in [0,1]$, the copula diagonal $\delta$ fulfills either 
$t \, \delta^\prime(t) = \delta(t)$ or $t \, \delta^\prime(t) = 2 \, \delta(t)$.
\end{corollary}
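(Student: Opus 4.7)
The plan is to deduce the corollary directly from Theorem \ref{Rep:Chatterjee.2}, which already expresses $\xi(S_\delta)$ as $\tau(S_\delta)$ minus an explicit integral. Specifically, I would start from
\begin{align*}
  \xi(S_\delta) - \tau(S_\delta)
  &= -2 \, \int_{(0,1]} \frac{(t\,\delta'(t) - \delta(t))\,(2\,\delta(t) - t\,\delta'(t))}{t} \de \lambda(t),
\end{align*}
and observe that by the shape constraints \eqref{Def:LSL2}, the two factors $t\,\delta'(t) - \delta(t)$ and $2\,\delta(t) - t\,\delta'(t)$ are both non-negative for $\lambda$-almost every $t \in (0,1]$. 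Since $1/t > 0$ on $(0,1]$, the integrand is pointwise non-negative $\lambda$-a.e.

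Then I would invoke the standard fact that a non-negative measurable function has zero Lebesgue integral if and only if it vanishes $\lambda$-almost everywhere. Hence $\xi(S_\delta) = \tau(S_\delta)$ is equivalent to
\begin{align*}
  (t\,\delta'(t) - \delta(t))\,(2\,\delta(t) - t\,\delta'(t)) = 0 \qquad \text{for $\lambda$-a.e.\ } t \in (0,1].
\end{align*}
A product of two real numbers vanishes precisely when at least one factor vanishes, so this condition is equivalent to: for $\lambda$-almost every $t \in [0,1]$, either $t\,\delta'(t) = \delta(t)$ or $t\,\delta'(t) = 2\,\delta(t)$, which is the desired characterization.

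There is no real obstacle here, as the argument is a direct consequence of Theorem \ref{Rep:Chatterjee.2} combined with the elementary measure-theoretic fact above; the only point that deserves a brief mention is that the behaviour at the single point $t=0$ is immaterial since it has Lebesgue measure zero, and that the measurable version of $\delta'$ discussed in the remark after Example \ref{ex:MO} is used throughout so the pointwise statement is well-defined.
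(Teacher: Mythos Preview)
Your argument is correct and is exactly the approach the paper takes: the corollary is stated as a direct consequence of Theorem~\ref{Rep:Chatterjee.2} together with the shape constraints~\eqref{Def:LSL2}, and your write-up just spells out the one-line measure-theoretic justification (non-negative integrand has zero integral iff it vanishes $\lambda$-a.e.).
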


Corollary \ref{thm:sharpineq} also applies to copula diagonals other than \(u_a\) and \(l_a\) such as those studied in Example \ref{ex:otherdiag}.
As stated in the proof of Theorem \ref{thm:markovtau}, the lower inequality is sharp for copula diagonals $\delta_p$ discussed in Example \ref{ex:power}. 
Therefore, according to Example  \ref{ex:MO}, the lower inequality is attained by Marshall-Olkin copulas $M_{\alpha,\alpha}$.

\subsection{Further inequalities for lower semilinear copulas}

We proceed by establishing connections between Chatterjee’s rank correlation and both Spearman’s rho and Spearman’s footrule.
The next result is a consequence of Theorem \ref{thm:markovtau} in combination with Theorem \ref{Thm:taurho}.

\begin{corollary} \label{cor:markovrho}
The inequality
\[
  \xi(S_\delta)\leq \rho(S_\delta)
\]
holds for all \(S_\delta \in \mathcal{C}^{\rm LSL}\) and it is sharp.
\end{corollary}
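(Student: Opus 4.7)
The plan is to obtain the inequality by simply chaining two results already established in the paper. The upper bound in Theorem \ref{thm:markovtau} gives $\xi(S_\delta) \leq \tau(S_\delta)$, while the lower bound in Theorem \ref{Thm:taurho} gives $\tau(S_\delta) \leq \rho(S_\delta)$. Concatenating these two inequalities immediately yields $\xi(S_\delta) \leq \rho(S_\delta)$, so no new estimate needs to be produced.

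To establish sharpness, the natural strategy is to locate a lower semilinear copula that attains equality in both inequalities simultaneously. The candidates are the copulas $S_{l_a}$ from Example \ref{ex:uala}: it is already recorded (after the proof of Theorem \ref{Thm:taurho}) that these copulas saturate the lower bound, i.e.\ $\tau(S_{l_a}) = \rho(S_{l_a}) = a^4$. It remains to check that they also attain $\xi(S_{l_a}) = \tau(S_{l_a})$; for this I would verify that $l_a$ satisfies the characterization in Corollary \ref{thm:sharpineq}, namely $t\,l_a'(t) = l_a(t)$ on $(0,a)$ (since $l_a(t) = at$ there) and $t\,l_a'(t) = 2\,l_a(t)$ on $(a,1)$ (since $l_a(t) = t^2$ there). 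Combining both equalities gives $\xi(S_{l_a}) = \tau(S_{l_a}) = \rho(S_{l_a}) = a^4$, so the inequality is sharp.

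There is no real obstacle here; the only thing to keep track of is the consistency between which earlier result is invoked and the fact that $l_a$ belongs to the family described in Corollary \ref{thm:sharpineq}.
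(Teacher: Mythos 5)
Your proposal is correct and coincides with the paper's argument: the inequality is obtained by chaining $\xi(S_\delta)\leq\tau(S_\delta)$ (Theorem \ref{thm:markovtau}) with $\tau(S_\delta)\leq\rho(S_\delta)$ (Theorem \ref{Thm:taurho}), and sharpness is witnessed by the copulas $S_{l_a}$, for which $\tau=\rho$ and $\xi=\tau$ both hold. Your verification that $l_a$ satisfies the characterization of Corollary \ref{thm:sharpineq} is exactly the justification the paper uses.
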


\begin{remark}
In \cite{ansari2025rockel}, the authors have established the inequality $\xi(C) \leq |\rho(C)|$
for copulas $C$ that are stochastically monotone, that is, the copula's partial derivate (where it exists) is monotone in the conditioning argument; see \cite{fuchs2023, fuchs2024} for more information on dependence properties of copulas.
It is important to note, however, that lower semilinear copulas typically fail to exhibit stochastic monotonicity; see \cite[Example 3.4]{maislinger2025}.
Consequently, the inequality stated in Corollary \ref{cor:markovrho} is not encompassed by the corresponding result in \cite{ansari2025rockel}.
\end{remark}

\begin{corollary}
Within the class of lower semilinear copulas, the identity \(\xi(S_\delta) = \rho(S_\delta)\)
is attained exclusively for the copulas of the form \(S_{l_a}\) with copula diagonal $l_a$.
\end{corollary}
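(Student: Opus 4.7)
The plan is to exploit the chain of inequalities $\xi(S_\delta) \leq \tau(S_\delta) \leq \rho(S_\delta)$ to reduce the equality $\xi(S_\delta) = \rho(S_\delta)$ to the simultaneous attainment of $\xi(S_\delta) = \tau(S_\delta)$ and $\tau(S_\delta) = \rho(S_\delta)$, and then identify each of these with the appropriate class of diagonals already characterized earlier in the paper.

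Concretely, I would first note that Theorem \ref{thm:markovtau} gives $\xi(S_\delta) \leq \tau(S_\delta)$ and Theorem \ref{Thm:taurho} gives $\tau(S_\delta) \leq \rho(S_\delta)$, so $\xi(S_\delta) = \rho(S_\delta)$ forces both inequalities to be equalities. The lower inequality in Theorem \ref{Thm:taurho} has already been noted (via the cited result of \cite{maislinger2025}) to be sharp exclusively for the copulas $S_{l_a}$ with copula diagonal $l_a$, which disposes of the necessity direction: the only candidates for $\xi(S_\delta) = \rho(S_\delta)$ are the diagonals $\delta = l_a$ with $a \in [0,1]$.

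For the converse direction, I would verify that every $\delta = l_a$ indeed achieves $\xi(S_{l_a}) = \rho(S_{l_a})$. By Lemma \ref{lem:la}, $l_a$ satisfies $\lambda$-almost everywhere either $\delta(t) = t^2$ (equivalently $t \, \delta^\prime(t) = 2 \, \delta(t)$ on the branch $\delta_p$ with $p=2$) or $t \, \delta^\prime(t) = \delta(t)$. Hence Corollary \ref{thm:sharpineq} applies and yields $\xi(S_{l_a}) = \tau(S_{l_a})$. Combining this with the values $\tau(S_{l_a}) = \rho(S_{l_a}) = a^4$ read off from Example \ref{ex:uala} gives $\xi(S_{l_a}) = \rho(S_{l_a})$, as required.

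No real obstacle is anticipated here, since all analytical content has been carried out in the preceding theorems; the argument is a short bookkeeping exercise that threads together Theorems \ref{Thm:taurho} and \ref{thm:markovtau}, Lemma \ref{lem:la}, Corollary \ref{thm:sharpineq}, and the sharpness statement from \cite{maislinger2025}. The only point requiring a modicum of care is to make explicit that the condition in Corollary \ref{thm:sharpineq} (i.e.\ $t \, \delta^\prime(t) \in \{\delta(t), 2 \, \delta(t)\}$ a.e.) is indeed satisfied by $l_a$, which is immediate from the piecewise definition of $l_a$ in Example \ref{ex:uala}.
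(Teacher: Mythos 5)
Your proposal is correct and follows essentially the same route as the paper: both reduce $\xi(S_\delta)=\rho(S_\delta)$ via the chain $\xi\leq\tau\leq\rho$ to the characterization of $\tau=\rho$ from \cite{maislinger2025} and the fact that $\xi=\tau$ holds for $\delta=l_a$. Your version merely spells out the verification that $l_a$ satisfies the condition of Corollary \ref{thm:sharpineq}, which the paper leaves implicit.
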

\begin{proof}
The statement follows immediately from the fact that \(\tau(S_\delta)=\rho(S_\delta)\) if and only if \(\delta=l_a\) and that \(\xi(S_\delta)=\tau(S_\delta)\) whenever \(\delta=l_a\).
\end{proof}

The next result is a consequence of Theorem \ref{thm:markovtau} in combination with Theorem \ref{Thm:tauphi}.

\begin{corollary} \label{cor:markovphi}
The inequality
\[
  \xi(S_\delta) \leq \phi(S_\delta)
\]
holds for all \(S_\delta \in \mathcal{C}^{\rm LSL}\) and it is sharp.
\end{corollary}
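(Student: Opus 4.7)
The plan is to obtain the inequality by chaining two results already established in the paper. By Theorem \ref{thm:markovtau} we have $\xi(S_\delta) \leq \tau(S_\delta)$, and by Theorem \ref{Thm:tauphi} we have $\tau(S_\delta) \leq \phi(S_\delta)$. Composing these two bounds immediately yields $\xi(S_\delta) \leq \phi(S_\delta)$ for every $S_\delta \in \mathcal{C}^{\rm LSL}$. There is no real analytic obstacle in the inequality part, since both intermediate steps are already proved.

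The only point requiring attention is sharpness: I need a copula diagonal $\delta$ for which both intermediate inequalities are attained simultaneously. According to Corollary \ref{tauphi:sharp}, the equality $\tau(S_\delta) = \phi(S_\delta)$ is attained exclusively on the family $\delta = u_a$. Fortunately, this family also fulfills the sharpness condition in Corollary \ref{thm:sharpineq}: indeed, by Lemma \ref{lem:ua}, for $\delta = u_a$ one has, for $\lambda$-almost every $t \in [0,1]$, either $\delta(t) = t$ (which gives $t\,\delta'(t) = \delta(t)$ away from a null set where $\delta'=1$, so more precisely $t\,\delta'(t) = t = \delta(t)$) or $t\,\delta'(t) = 2\,\delta(t)$. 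In both cases the integrand in Theorem \ref{Rep:Chatterjee.2} vanishes, so $\xi(S_{u_a}) = \tau(S_{u_a})$.

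Combining these observations, for $\delta = u_a$ with $a \in [0,1]$ one has
\begin{equation*}
  \xi(S_{u_a}) = \tau(S_{u_a}) = \phi(S_{u_a}) = 1 - a^2,
\end{equation*}
which establishes sharpness of the inequality on the entire family $\{u_a : a \in [0,1]\}$. This finishes the proposed proof.
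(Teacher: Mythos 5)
Your proposal is correct and follows essentially the same route as the paper: the inequality is obtained by chaining $\xi(S_\delta)\leq\tau(S_\delta)$ from Theorem \ref{thm:markovtau} with $\tau(S_\delta)\leq\phi(S_\delta)$ from Theorem \ref{Thm:tauphi}, and sharpness is verified on the family $u_a$, where both intermediate inequalities become equalities. This matches the paper's stated derivation and its subsequent characterization of the equality case via $S_{u_a}$.
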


\begin{corollary}
Within the class of lower semilinear copulas, the identity \(\xi(S_\delta) = \phi(S_\delta)\)
is attained exclusively for the copulas of the form \(S_{u_a}\) with copula diagonal $u_a$.
\end{corollary}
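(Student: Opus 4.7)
The plan is to exploit the chain of inequalities $\xi(S_\delta) \leq \tau(S_\delta) \leq \phi(S_\delta)$, which follows by combining the upper inequality in Theorem \ref{thm:markovtau} with the lower inequality in Theorem \ref{Thm:tauphi}. If $\xi(S_\delta) = \phi(S_\delta)$, then equality must be attained in both steps simultaneously, so I reduce the claim to simultaneously characterizing the diagonals that satisfy $\tau(S_\delta) = \phi(S_\delta)$ and $\xi(S_\delta) = \tau(S_\delta)$.

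For the first equality, I invoke Corollary \ref{tauphi:sharp}, which pins down the diagonal as $\delta = u_a$ for some $a \in [0,1]$. This single application already identifies the only candidate family; in fact, because $\tau = \phi$ forces $\delta = u_a$ and we have $\xi \leq \tau \leq \phi$, the $u_a$-family is the only possible source of equality. The second equality $\xi = \tau$ serves as a consistency check that the candidate actually achieves $\xi(S_{u_a}) = \phi(S_{u_a})$; for this I use Corollary \ref{thm:sharpineq} (or a direct computation from Theorem \ref{Rep:Chatterjee.2}), verifying that $u_a$ satisfies $t\,\delta'(t) = 2\,\delta(t)$ on $[0,a]$ and $t\,\delta'(t) = \delta(t)$ on $(a,1]$, which is exactly the dichotomy required.

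For the converse direction, I combine the numerical values $\tau(S_{u_a}) = \phi(S_{u_a}) = 1 - a^2$ from Example \ref{ex:uala} with the just-verified identity $\xi(S_{u_a}) = \tau(S_{u_a})$, concluding $\xi(S_{u_a}) = \phi(S_{u_a})$. No technical obstacle arises here; the whole argument is a clean reduction, and the only step worth stating carefully is the observation that a chain $\xi \leq \tau \leq \phi$ with equal endpoints forces equality in each link, so that both prior characterizations (Corollaries \ref{tauphi:sharp} and \ref{thm:sharpineq}) become available and their intersection collapses to precisely the $u_a$-family.
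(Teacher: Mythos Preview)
Your proposal is correct and follows exactly the paper's argument: use the chain $\xi(S_\delta)\leq\tau(S_\delta)\leq\phi(S_\delta)$, so that $\xi(S_\delta)=\phi(S_\delta)$ forces $\tau(S_\delta)=\phi(S_\delta)$, which by Corollary~\ref{tauphi:sharp} pins down $\delta=u_a$; conversely, for $\delta=u_a$ one has $\xi(S_{u_a})=\tau(S_{u_a})=\phi(S_{u_a})$. The only difference is that you spell out the verification of $\xi(S_{u_a})=\tau(S_{u_a})$ via Corollary~\ref{thm:sharpineq}, whereas the paper simply asserts it.
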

\begin{proof}
The statement follows immediately from the fact that \(\tau(S_\delta)=\phi(S_\delta)\) if and only if \(\delta=u_a\) (due to Corollary \ref{tauphi:sharp}) and that \(\xi(S_\delta)=\tau(S_\delta)\) whenever \(\delta=u_a\).
\end{proof}

For the sake of completeness, we list below the values of Chatterjee's rank correlation $\xi(S_\delta)$ for the diagonals introduced in earlier examples:
\begin{center}
    \begin{tabular}{l||cccc}
    \toprule
    $\delta$
    & $u_a$ (Ex.~\ref{ex:uala}) 
    & $l_a$ (Ex.~\ref{ex:uala})
    & $\delta_p$ (Ex.~\ref{ex:power}) 
    & $\delta_\alpha$ (Ex.~\ref{ex:frechet}) 
    \\
    \midrule
    $\xi(S_\delta)$ & $1-a^2$ & $a^4$ & $\frac{(2-p)^2}{p}$ & $\alpha^2$ \\
    \bottomrule
    \end{tabular}
\end{center}

Finally, we return to the discussion of Marshall–Olkin copulas initiated in Example \ref{ex:MO}.
While Marshall–Olkin copulas, in general, fail to be lower semilinear, the corresponding Markov product does exhibit lower semilinearity according to \cite{maislinger2025}.
For the calculation of $\xi(M_{\alpha,\beta})$ we can therefore make use of the connection between $\xi$ and $\phi$ stated in \eqref{Rep:Chatterjee}.

\begin{example}[Marshall-Olkin copulas] \label{ex:MO2}
According to \cite{fuchs2023JMVA}, the Markov product of a Marshall–Olkin copula is given by
\[
  M_{\alpha,\beta}^T\ast M_{\alpha,\beta}
  = \begin{cases}
      \Pi+\frac{\alpha^2}{1-2\alpha}\Pi \left(1-\Pi^{\frac{\beta-2\alpha\beta}{\alpha}}M^{\frac{2\alpha\beta-\beta}{\alpha}}\right) 
      & \alpha\notin \{0, \frac{1}{2}, 1\} 
      \\
      \Pi 
      & \alpha=0 
      \\
      \Pi+\frac{\beta}{2}\Pi(\log(M)-\log(\Pi)) 
      & \alpha=\frac{1}{2}
      \\
      M_{\beta,\beta} 
      & \alpha=1.
    \end{cases}
\]
Moreover, $M_{\alpha,\beta}^T\ast M_{\alpha,\beta} \in \mathcal{C}^{\rm LSL}$ due to \cite[Theorem 4.10]{maislinger2025}, and its corresponding diagonal satisfies
\begin{align*}
  \delta(t)
  & = 
  \begin{cases}
    t^2 \; \left(\frac{(1-\alpha)^2}{1-2\alpha} - \frac{\alpha^2}{1-2\alpha} t^{\beta \, \frac{1-2\,\alpha}{\alpha}} \right) & \alpha\notin \{0, \frac{1}{2}\}
    \\
    t^2 
    & \alpha=0 
    \\
    t^2 \left(1-\frac{\beta}{2}\log(t)\right) 
    & \alpha=\frac{1}{2}\,.
  \end{cases}
\end{align*}
Straightforward calculation incorporating the identity in \eqref{Rep:Chatterjee} then yields (see also \cite{ansari2024rockel})
\begin{align*}
  \xi(M_{\alpha,\beta})= 
  \begin{cases}
    0 & \text{if } (\alpha, \beta)=(0,0)
    \\
    \frac{2\alpha^2\beta}{3\alpha+\beta-2\alpha\beta} & \text{otherwise,}
\end{cases}
\end{align*}
noting that $3\alpha+\beta-2\alpha\beta = 0$ if and only if $(\alpha, \beta) = (0,0)$.
\end{example}

\begin{remark}
The established \(\tau\)-\(\xi\) region and the inequalities stated in Corollaries \ref{cor:markovrho} and \ref{cor:markovphi} are valid for lower semilinear copulas $S_\delta \in \mathcal{C}^{\rm LSL}$. 
However, for copulas whose copula diagonal $\delta$ belongs to the class $\mathcal{D}^{\rm LSL}$ but which are not themselves lower semilinear, the corresponding inequalities may not hold.
As a counterexample, consider the Marshall-Olkin copulas discussed in Example \ref{ex:MO2}: 
Although $\xi(M_{\alpha,\beta}) \leq \rho(M_{\alpha,\beta})$ for any choice of $(\alpha,\beta) \in [0,1]^2$ due to stochastic monotonicity of $M_{\alpha,\beta}$ in combination with \cite[Theorem 1.3]{ansari2025rockel}, the other inequalities can not be guaranteed since Marshall–Olkin copulas, in general, fail to be lower semilinear. As a counterexample, consider $(\alpha,\beta) = (1/2,3/4)$.
Then,  
$$ \frac{2 \, \tau(M_{\alpha,\beta})^2}{1+\tau(M_{\alpha,\beta})} = \frac{9}{35} > \frac{9}{36} = \xi(M_{\alpha,\beta})\,.$$
This contradicts Theorem \ref{thm:markovtau}.
\end{remark}

\subsection{Conclusion}

By proving the inequality $(1-\tau(S_\delta))^3 \leq (1-\rho(S_\delta))^2$ for lower semilinear copulas, we have confirmed the conjecture about the \(\tau\)-\(\rho\) region stated in \cite{maislinger2025}.
We extended the investigation to Spearman's footrule and Chatterjee's rank correlation and derived the exact \(\tau\)-\(\phi\), \(\phi\)-\(\rho\), and \(\tau\)-\(\xi\) regions for lower semilinear copulas.
Comparing the different regions, it is striking that the values of $\tau$ and $\phi$ differ only slightly, in contrast to the values of $\phi$ and $\rho$.
This is illustrated by computing, for each region, the proportion of the area in the unit square it occupies:
\begin{center}
\begin{tabular}{l||cccc}
\toprule
Pair & $(\tau\text{,} \rho)$ 
& $(\tau\text{,} \phi)$ 
& $(\phi\text{,} \rho)$
& $(\tau\text{,} \xi)$ 
\\
\midrule
Area 
& $\frac{1}{10} = 0.10 $ 
& $\frac{1}{14} \approx 0.07$ 
& $\frac{6}{35} \approx 0.17$
& $\frac{3}{2} - 2 \, \ln(2) \approx 0.11$
\\
\bottomrule
\end{tabular}
\end{center}
The idea of making the regions comparable by computing the proportion of each region's area has its origin in \cite{kokol2022}. 
We refer the reader to \cite{kokol2023,stopar2024,schreyer2017} for results on exact regions between concordance measures for the general class of copulas.

Although Chatterjee’s rank correlation is easily seen to be continuous for lower semilinear copulas with convex copula diagonals \cite[Theorem 3.5]{ansari2025Cont}, its continuity in the general case remains an open problem.
Therefore, convexity and compactness of the \(\tau\)-\(\xi\) region do not follow directly. We plan to tackle this problem in the near future.

Rather than limiting the analysis to pairwise comparisons of dependence measures, we see significant potential in extending the framework to higher-dimensional settings, such as triples. Pairwise results do not necessarily generalize to this case; for instance, simulations comparing $\tau$, $\rho$, and $\phi$ indicate that the feasible $\tau$-$\rho$-$\phi$ region for lower semilinear copulas is considerably smaller than the intersection of the corresponding pairwise regions; see Figure \ref{Fig:tauphirho} for an illustration.

\begin{figure}[hb!]
  \centering
  \includegraphics[width=0.475\textwidth]{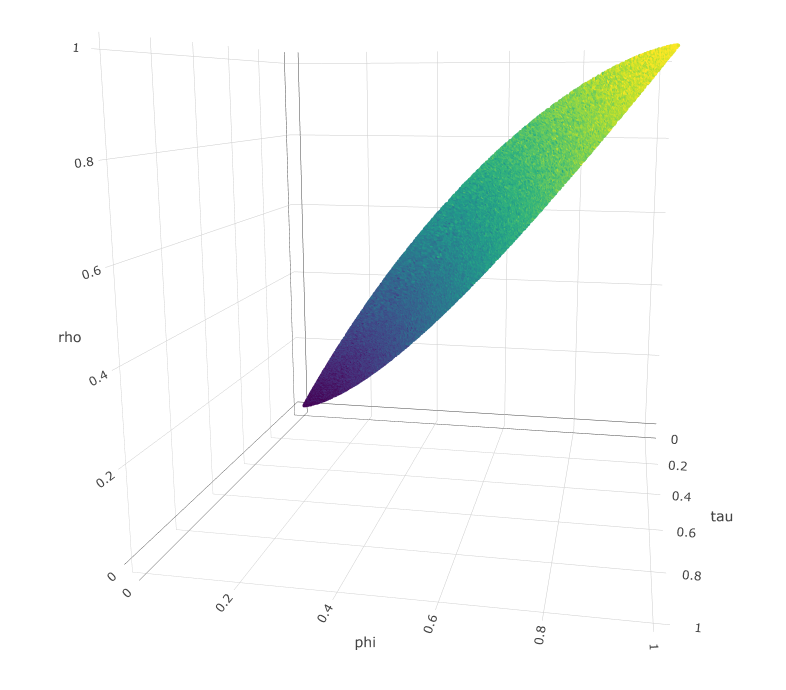}
  \includegraphics[width=0.475\textwidth]{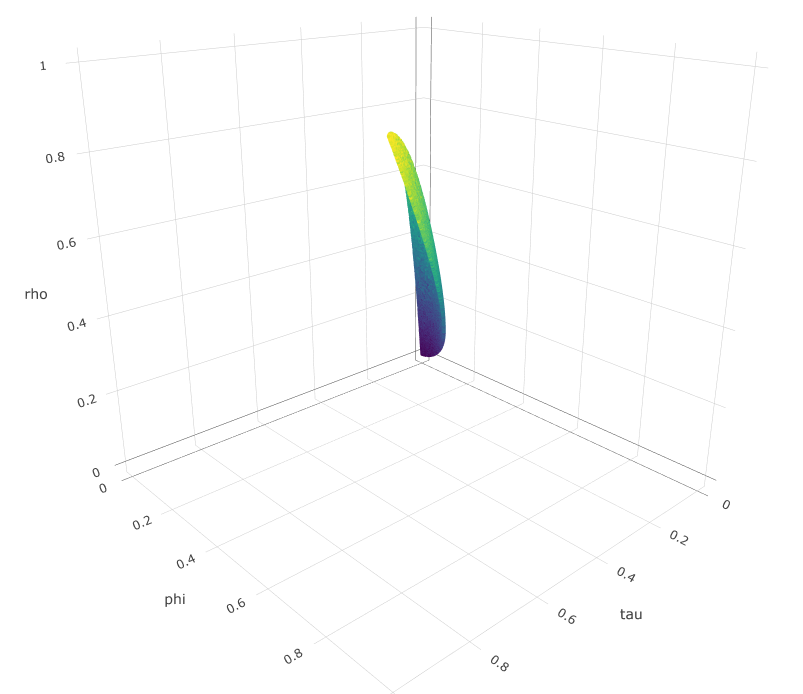}
\caption{Simulated \(\tau\)-\(\rho\)-\(\phi\) region from two different angles.}
\label{Fig:tauphirho}
\end{figure}

\section*{Acknowledgment}
The first and second authors gratefully acknowledge the support of the WISS 2025 project 'IDA-lab Salzburg' (20204-WISS/225/197-2019 and 20102-F1901166-KZP) and the support of the Austrian Science Fund (FWF) project {P 36155-N} \emph{ReDim: Quantifying Dependence via Dimension Reduction}.


\begin{thebibliography}{}

\bibitem[\protect\citeauthoryear{Ansari and Fuchs}{Ansari and
  Fuchs}{2025}]{ansari2025Cont}
Ansari, J. and S.~Fuchs (2025+).
\newblock On continuity of {C}hatterjee's rank correlation and related
  dependence measures.
\newblock {\em Available at \url{arxiv.org/abs/2503.11390}\/}.

\bibitem[\protect\citeauthoryear{Ansari and Rockel}{Ansari and
  Rockel}{2024}]{ansari2024rockel}
Ansari, J. and M.~Rockel (2024).
\newblock Dependence properties of bivariate copula families.
\newblock {\em Depend. Model.\/}, Article ID 20240002.

\bibitem[\protect\citeauthoryear{Ansari and Rockel}{Ansari and
  Rockel}{2025}]{ansari2025rockel}
Ansari, J. and M.~Rockel (2025+).
\newblock The exact region and an inequality between {C}hatterjee's and
  {S}pearman's rank correlations.
\newblock {\em Available at \url{arxiv.org/abs/2506.15897}\/}.

\bibitem[\protect\citeauthoryear{Billingsley}{Billingsley}{1995}]{billingsley-1995}
Billingsley, P. (1995).
\newblock {\em Probability and Measure\/} (3rd ed.).
\newblock John Wiley \& Sons, New York.

\bibitem[\protect\citeauthoryear{Chatterjee}{Chatterjee}{2020}]{chatterjee2020}
Chatterjee, S. (2020).
\newblock A new coefficient of correlation.
\newblock {\em J. Amer. Statist. Ass.\/}~{\em 116\/}(536), 2009--2022.

\bibitem[\protect\citeauthoryear{Dette, Siburg, and Stoimenov}{Dette
  et~al.}{2013}]{siburg2013}
Dette, H., K.~F. Siburg, and P.~A. Stoimenov (2013).
\newblock A copula-based non-parametric measure of regression dependence.
\newblock {\em Scand. J. Statist.\/}~{\em 40\/}(1), 21--41.

\bibitem[\protect\citeauthoryear{Durante}{Durante}{2006}]{durante2006}
Durante, F. (2006).
\newblock A new class of symmetric bivariate copulas.
\newblock {\em J. Nonparametr. Stat.\/}~{\em 18}, 499--510.

\bibitem[\protect\citeauthoryear{Durante, Kolesarova, Mesiar, and
  Sempi}{Durante et~al.}{2008}]{durante2008}
Durante, F., A.~Kolesarova, R.~Mesiar, and C.~Sempi (2008).
\newblock Semilinear copulas.
\newblock {\em Fuzzy Sets and Systems\/}~{\em 159}, 63--76.

\bibitem[\protect\citeauthoryear{Durante and Sempi}{Durante and
  Sempi}{2016}]{durante2016}
Durante, F. and C.~Sempi (2016).
\newblock {\em Principles of Copula Theory}.
\newblock CRC Press, Boca Raton FL.

\bibitem[\protect\citeauthoryear{Fredricks and Nelsen}{Fredricks and
  Nelsen}{2007}]{fredricks2007}
Fredricks, G.~A. and R.~B. Nelsen (2007).
\newblock On the relationship between {S}pearman's rho and {K}endall's tau for
  pairs of continuous random variables.
\newblock {\em J. Statist. Plann. Inference\/}~{\em 137\/}(7), 2143--2150.

\bibitem[\protect\citeauthoryear{Fuchs}{Fuchs}{2024}]{fuchs2023JMVA}
Fuchs, S. (2024).
\newblock Quantifying directed dependence via dimension reduction.
\newblock {\em J. Multivariate Anal.\/}~{\em 201}, Article ID 105266.

\bibitem[\protect\citeauthoryear{Fuchs and Limbach}{Fuchs and
  Limbach}{2025}]{fuchs2025DeMo}
Fuchs, S. and C.~Limbach (2025+).
\newblock A dimension reduction for extreme types of directed dependence.
\newblock {\em Available at \url{arxiv.org/abs/2506.04825}\/}.

\bibitem[\protect\citeauthoryear{Fuchs and Tschimpke}{Fuchs and
  Tschimpke}{2023}]{fuchs2023}
Fuchs, S. and M.~Tschimpke (2023).
\newblock Total positivity of copulas from a {Markov} kernel perspective.
\newblock {\em J. Math. Anal. Appl.\/}~{\em 518}, Article 126629.

\bibitem[\protect\citeauthoryear{Fuchs and Tschimpke}{Fuchs and
  Tschimpke}{2024}]{fuchs2024}
Fuchs, S. and M.~Tschimpke (2024).
\newblock A novel positive dependence property and its impact on a popular
  class of concordance measures.
\newblock {\em J. Multivariate Anal.\/}~{\em 200}, Article ID 105259.

\bibitem[\protect\citeauthoryear{Kokol~Bukov\v{s}ek and
  Moj\v{s}kerc}{Kokol~Bukov\v{s}ek and Moj\v{s}kerc}{2022}]{kokol2022}
Kokol~Bukov\v{s}ek, D. and B.~Moj\v{s}kerc (2022).
\newblock On the exact region determined by {S}pearman's footrule and {G}ini's
  gamma.
\newblock {\em J. Comput. Appl.Math\/}~{\em 410}, Article ID 114212.

\bibitem[\protect\citeauthoryear{Kokol~Bukov\v{s}ek and
  Stopar}{Kokol~Bukov\v{s}ek and Stopar}{2023}]{kokol2023}
Kokol~Bukov\v{s}ek, D. and N.~Stopar (2023).
\newblock On the exact regions determined by {K}endall's tau and other
  concordance measures.
\newblock {\em Mediterr. J. Math.\/}~{\em 20}, Article ID 147.

\bibitem[\protect\citeauthoryear{Kokol~Bukov\v{s}ek and
  Stopar}{Kokol~Bukov\v{s}ek and Stopar}{2024}]{stopar2024}
Kokol~Bukov\v{s}ek, D. and N.~Stopar (2024).
\newblock On the exact region determined by {S}pearman's rho and {S}pearman's
  footrule.
\newblock {\em J. Comput. Appl.Math\/}~{\em 437}, Article ID 115463.

\bibitem[\protect\citeauthoryear{Maislinger and Trutschnig}{Maislinger and
  Trutschnig}{2025}]{maislinger2025}
Maislinger, L. and W.~Trutschnig (2025).
\newblock On bivariate lower semilinear copulas and the star product.
\newblock {\em Int. J. Approx. Reason.\/}~{\em 179}, Article ID 109366.

\bibitem[\protect\citeauthoryear{Nelsen}{Nelsen}{2006}]{nelsen2006}
Nelsen, R.~B. (2006).
\newblock {\em An Introduction to Copulas.\/} (2nd ed.).
\newblock New York: Springer.

\bibitem[\protect\citeauthoryear{Pap}{Pap}{2002}]{pap2002}
Pap, E. (2002).
\newblock {\em Handbook of Measure Theory}.
\newblock North Holland, Amsterdam.

\bibitem[\protect\citeauthoryear{Schreyer, Paulin, and Trutschnig}{Schreyer
  et~al.}{2017}]{schreyer2017}
Schreyer, M., R.~Paulin, and W.~Trutschnig (2017).
\newblock On the exact region determined by {K}endall's $\tau$ and {S}pearman's
  $\rho$.
\newblock {\em J. R. Stat. Soc., Ser. B, Stat. Methodol.\/}~{\em 79}, 613--633.

\end{thebibliography}

\end{document}